\documentclass[
 prapplied,twocolumn,
 amsmath,amssymb,
 aps,nofootinbib,longbibliography,
]{revtex4-2}

\usepackage{amsthm}
\usepackage{graphicx}
\usepackage{color}
\usepackage{dcolumn}
\usepackage{bm}
\usepackage{mathtools}      % \coloneqq
\usepackage{mathrsfs}       % for operator symbol
\usepackage{braket}
\usepackage{float}
\usepackage[table]{xcolor}

\newtheorem{lemma}{Lemma}

\newtheorem{corollary}{Corollary}

\newcommand{\cA}{{\mathcal{A}}}

\newcommand{\cD}{{\mathcal{D}}}
\newcommand{\cE}{{\mathcal{E}}}
\newcommand{\cF}{{\mathcal{F}}}

\newcommand{\cN}{{\mathcal{N}}}

\newcommand{\tcS}{{\tilde{\cS}}}

\newcommand{\tpsi}{{\tilde{\psi}}}

\newcommand{\cM}{{\mathcal{M}}}
\newcommand{\cX}{{\mathcal{X}}}
\newcommand{\cY}{{\mathcal{Y}}}

\newcommand{\cS}{{\mathcal{S}}}

\newcommand{\bx}{{\boldsymbol{x}}}

\newcommand{\bz}{{\boldsymbol{z}}}

\newcommand{\bg}{{\boldsymbol{g}}}

\newcommand{\bth}{{\boldsymbol{\theta}}}

\newcommand{\bP}{{\boldsymbol{P}}}
\newcommand{\bC}{{\boldsymbol{C}}}

\newcommand{\ba}{{\boldsymbol{a}}}
\newcommand{\bb}{{\boldsymbol{b}}}

\newcommand{\be}{{\boldsymbol{e}}}

\newcommand{\bbR}{{\mathbb{R}}}

\newcommand{\bbI}{{\mathbb{I}}}

\DeclareMathOperator{\tr}{\textup{Tr}}

\usepackage{hyperref}
\hypersetup{
    colorlinks = true,
    linkcolor = {blue},
    citecolor = {blue},
    urlcolor = {blue},
}
\usepackage[utf8]{inputenc} % This defines the font-encoding you prefer to userep

\begin{document}
\title{Learning Quantum Data Distribution via Chaotic Quantum Diffusion Model}

\author{Quoc Hoan Tran}
\email{tran.quochoan@fujitsu.com}
\affiliation{Quantum Laboratory, Fujitsu Research, Fujitsu Limited, Kawasaki, Kanagawa 211-8588, Japan}

\author{Koki Chinzei}
\affiliation{Quantum Laboratory, Fujitsu Research, Fujitsu Limited, Kawasaki, Kanagawa 211-8588, Japan}

\author{Yasuhiro Endo}
\affiliation{Quantum Laboratory, Fujitsu Research, Fujitsu Limited, Kawasaki, Kanagawa 211-8588, Japan}

\author{Hirotaka Oshima}
\affiliation{Quantum Laboratory, Fujitsu Research, Fujitsu Limited, Kawasaki, Kanagawa 211-8588, Japan}

\date{\today}

\begin{abstract}
Generative models for quantum data pose significant challenges but hold immense potential in fields such as chemoinformatics and quantum physics. Quantum denoising diffusion probabilistic models (QuDDPMs) enable efficient learning of quantum data distributions by progressively scrambling and denoising quantum states. However, existing implementations typically rely on circuit-based random unitary dynamics, which can be costly to implement and sensitive to control imperfections, particularly on analog quantum hardware. We propose the chaotic quantum diffusion model, a framework that generates projected ensembles via chaotic Hamiltonian time evolution, providing a flexible and hardware-compatible diffusion mechanism. Requiring only global, time-independent control, our approach substantially reduces implementation overhead across diverse analog quantum platforms while achieving accuracy comparable to QuDDPMs. This method improves trainability and robustness, broadening the applicability of quantum generative modeling.
\end{abstract}

\pacs{Valid PACS appear here}

\maketitle

\section{Introduction}

Generative models aim to synthesize diverse data by learning their underlying probability distributions and play a central role in data understanding, simulation, and discovery. Quantum circuits are capable of generating classically intractable probability distributions, motivating the use of quantum systems as generative models that may outperform their classical counterparts in certain regimes~\cite{Boixo:2018:natphys,Arute:2019:nature}. Various quantum generative models have been proposed for classical data generation, including quantum generative adversarial networks (QuGANs)~\cite{lloyd:qugan:2018,zoufal:qugan:2019,huang:qugan:2021}, with style-based variants that improve image generation and drug design~\cite{baglio:2026:latent:qWass,bravo:2022:style}. Others include quantum variational autoencoders (QVAEs)~\cite{khoshaman:qvae:2018,wu:qvae-mole:2024}, tensor-network-based models~\cite{wall:tensor:2021}, and diffusion-based quantum models~\cite{cacioppo:diffusion:2023,parigi:diffusion:2024}. However, a clear empirical advantage over classical generative models remains unestablished. Recent results nonetheless point to possible regimes where a quantum advantage may emerge, such as an empirical capacity advantage in the number of trainable parameters~\cite{liepelt:2026:capacity} or a proposed exponential memory advantage in processing massive classical data~\cite{zhao:2026:advantage}.

More recently, growing attention has shifted toward quantum machine learning (QML) for quantum data, where the data themselves originate from quantum systems~\cite{editorial:2023:qmladv:nature}. Unlike classical data such as text or images, quantum data inherently encode quantum correlations, entanglement, and measurement statistics, posing fundamental challenges that classical generative models alone cannot address. Efficient quantum data generation is essential for advancing our understanding of quantum many-body systems and for applications in chemistry, biology, and materials science. Since quantum data generation intrinsically relies on quantum resources, quantum-native generative models are a natural and necessary choice.

Existing quantum generative models such as QuGANs and QVAEs can be used to prepare individual target quantum states~\cite{niu:equgan:2022,kim:hqugan:2024,tran:qvae:2024}, but they are generally inefficient at learning and generating ensembles of quantum states~\cite{beer:dqugan:2021}. This limitation arises mainly from the need to train deep variational quantum circuits (VQCs), which face optimization challenges such as barren plateaus~\cite{mcclean:barren:2018}. 
For broader context on assumptions, evaluation pitfalls, and limitations in contemporary QML, including quantum generative modeling, we refer to the recent review in Ref.~\cite{chang:2025:primerQML}.

Diffusion-based approaches provide a particularly promising route to ensemble learning. Early quantum extensions of the classical diffusion paradigm to quantum data have been formulated from complementary perspectives. A quantum-noise-driven paradigm has been introduced in which the forward diffusion is implemented by quantum channels (quantum Markov chains) in discrete time, or by stochastic Schrödinger-equation dynamics in continuous time~\cite {parigi:diffusion:2024}. This paradigm systematizes three variants depending on whether the forward and backward processes are realized classically or quantum mechanically.
The quantum denoising diffusion probabilistic model (QuDDPM)~\cite{zhang:qddpm:prl:2024} offers a promising approach to addressing issues in quantum generative models. QuDDPM learns a sequence of intermediate distributions that interpolate smoothly between a structured target distribution $\mathcal{E}_0$ and a maximally scrambled distribution, using quantum scrambling for the forward diffusion process and measurement-enabled VQCs for backward denoising. This layerwise training strategy improves trainability and mitigates barren plateaus.
However, existing QuDDPM implementations rely on circuit-based scrambling using random unitary circuits (RUCs), which require fine-grained spatio-temporal control and substantial circuit depth, with high costs in calibration and compilation.
These requirements pose significant implementation challenges, particularly on analog quantum platforms where control is typically global and time-independent. 

In this work, we propose a chaotic quantum diffusion model that replaces circuit-based scrambling with chaotic Hamiltonian time evolution. By generating projected ensembles through quantum chaos, our approach provides a flexible, hardware-compatible diffusion mechanism that eliminates the need for RUCs and enables efficient implementation across a broad range of quantum hardware.

\begin{figure*}
\centerline{\includegraphics[width = 1.0\linewidth]{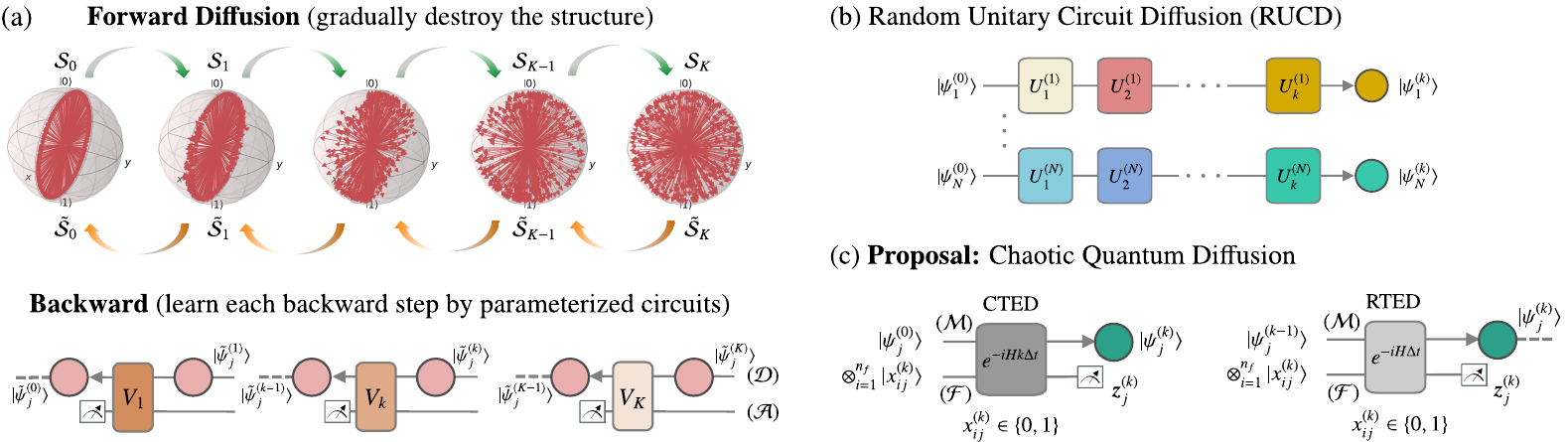}}
\caption{(a) General scheme of the quantum denoising diffusion probabilistic model (QuDDPM). 
The forward process applies quantum operations to initial quantum states $| \psi^{(0)}_j \rangle$ in the training dataset $S_0$, generating progressively noisier ensembles $S_k$ over $K$ steps. The backward process starts from random states $\{| \tilde{\psi}^{(K)}_j \rangle\}$ and trains parameterized quantum circuits $V_k$ on the data system $\mathcal{D}$ ($n_d$ qubits) coupled with an ancilla system $\mathcal{A}$ ($n_a$ qubits), followed by measurements on $\mathcal{A}$ to iteratively reconstruct the target distribution.
(b) Random unitary circuit diffusion (RUCD), where sequences of random unitary gates are applied to each training instance. 
(c) Chaotic quantum diffusion (our proposal) via the projected ensemble framework. Here, a chaotic and fixed Hamiltonian $H$ governs time evolution on a bipartite system comprising the data subsystem $\cM$ ($n_d$ qubits) and the complement subsystem  $\cF$ (initialized by random computational basis states with $n_f$ qubits).
Projective measurements on $\cF$ yield a projected ensemble on $\cM$.
In creating the diffused state ensemble $\cS_k$, applying cumulative evolution $k\Delta t$ to $\cS_0$ defines cumulative time evolution diffusion (CTED), 
while applying repeated time evolution $\Delta t$ to $\cS_{k-1}$ defines repeated time evolution diffusion (RTED).
}
\label{fig:overview}
\end{figure*}

\section{Quantum Denoising Diffusion Probabilistic Model (QuDDPM)}
%Generative models are powerful tools capable of producing data from a target domain and evaluating the likelihood of given data points. 
We address the problem of learning quantum data distributions and explain the idea of the QuDDPM model in the literature [Fig.~\ref{fig:overview}(a)].
Consider a training dataset $\cS$ of $N$ independent quantum states drawn from an unknown probability distribution $\cE_0$. A generative model is characterized by a parameterized probability distribution $\cE_\bth$, from which samples can be drawn. If $\cE_\bth$ is realized through VQCs, $\bth$ corresponds to the adjustable parameters of these circuits.
The training goal is to obtain a distribution $\cE_\bth$ that closely approximates $\cE_0$. We optimize the distance $D(\cE_\bth, \cE_0)$ such as the Maximum Mean Discrepancy (MMD) and Wasserstein distance between these distributions. Since $D(\cE_\bth, \cE_0)$ cannot be computed directly, we take a dataset $\cS_\bth$ sampling from $\cE_\bth$ and compute the distance $D(\cS, \cS_\bth)$. In the inference phase, we fix the trained $\bth$ and generate new states $\ket{\psi} \sim \cE_\bth$. 

Unlike prior quantum models that focus on generating average states, such as fully mixed states, QuDDPM targets entire state distributions, enabling the generation of correlated or structured quantum data. Building on the success of classical DDPMs in tasks such as image generation, QuDDPM extends this framework to quantum data.
Classical DDPMs add Gaussian noise in a forward process to gradually corrupt data toward a noise distribution (e.g., an isotropic Gaussian), then use learned denoising in the reverse process. QuDDPM replaces Gaussian noise with scrambling RUCs in the forward process, applying random unitaries that evolve quantum states toward a near–Haar-random distribution. This choice is more natural for quantum systems, as Gaussian noise does not directly correspond to physically valid unitary dynamics.

In the backward process, QuDDPM uses measurement-enabled denoising with VQCs and ancilla qubits via projective measurements. The ancilla qubits are entangled with the system qubits via controlled, parameterized operations, and subsequent measurements collapse the ancilla states, effectively denoising by projecting the system onto subspaces that approximate the target distribution.
These measurement-induced, non-unitary operations are essential for breaking the constraints of purely unitary quantum circuits and enabling flexible generative modeling. This contrasts with fully unitary quantum models, which are less expressive when learning inherently non-unitary processes such as noise.

\subsection{Forward diffusion process}

In the forward diffusion process using RUCs, $K$ random unitary gates $U_1^{(j)},\ldots, U_K^{(j)}$ are applied to each sample $\ket{\psi^{(0)}_j}$ ($j=1,\ldots,N$) in the training dataset $\cS$ [Fig.~\ref{fig:overview}(b)]. This evolves the ensemble $\cS_k =\left\{ \ket{\psi^{(k)}_j} = \prod_{l=1}^k U_l^{(j)}\ket{\psi^{(0)}_j} \right\}_j$ toward a Haar-random states ensemble over the Hilbert space. We call this scheme RUC diffusion (RUCD). Assuming the execution time for each $U^{(j)}_k$ is $\tau_u$, the execution time to generate each $\cS_k$ is $\tau_u Nk$. Therefore, the RUCD requires $NK$ random unitary gates with $\tau_u N\sum_{k=1}^K k= \tau_u NK(K+1)/2$ execution time to generate all $\cS_k$.

The forward diffusion process does not require exact Haar randomness. Recent studies have shown that shallow random circuits can efficiently approximate unitary $m$-designs with depth scaling polynomially or even logarithmically in system size for small $m$~\cite{schuster:RUC:science}. Such approximate designs are sufficient to rapidly scramble local observables and correlations, making them suitable for diffusion-based generative modeling. QuDDPM's implementation relies on approximate scrambling rather than high-fidelity Haar sampling.
Nevertheless, even shallow circuit constructions for approximate designs require explicit gate-level control and circuit compilation, which can be costly or infeasible on analog quantum platforms. This motivates alternative diffusion mechanisms that naturally generate effective scrambling without the need for circuit-based random unitaries.

\subsection{Backward denoising process}

The backward process starts with an ensemble $\tcS_K = \{\ket{\tpsi^{(K)}_j}\}_j$ sampled from Haar-random states and reduces noise step by step.
In practice, since it is difficult to sample from a complete Haar-random state distribution, $\tcS_K = \{\ket{\tpsi^{(K)}_j}\}_j$ are sampled from the product of single-qubit Haar-random states.
As depicted in Fig.~\ref{fig:overview}(a), the denoising step applies a parameterized unitary $V_k = V(\bth_k)$ to the data system $\cD$ (input $\ket{\tpsi^{(k)}_j}$) and $n_a$ ancilla qubits in the ancilla system $\cA$ ($\ket{\boldsymbol{0}}_\cA$), followed by projective measurements in the computational basis on $\cA$, yielding the state $\ket{\tpsi^{(k-1)}_j}$ in $\cD$. 
Assuming the measurement outcome $\bz^{(k)}_j$ is obtained on the ancilla, this operation is formulated as 
\begin{align}
\Phi^{(k)}_j(\ket{\tpsi^{(k)}_j}) &= \dfrac{(\bbI_\cD \otimes \Pi_\cA)V_k \ket{\tilde{\Psi}^{(k)}_j} }{\sqrt{\bra{\tilde{\Psi}^{(k)}_j}V^\dagger_k(\bbI_\cD \otimes \Pi_\cA) V_k\ket{\tilde{\Psi}^{(k)}_j}}}\\
&= \ket{\tpsi^{(k-1)}_j} \otimes \ket{\bz^{(k)}_j}_\cA,
\end{align}
where $\Pi_\cA = {\ket{\bz^{(k)}_j}\bra{\bz^{(k)}_j}}_\cA$ and $\ket{\tilde{\Psi}^{(k)}_j} = \ket{\tpsi^{(k)}_j} \otimes \ket{\boldsymbol{0}}_\cA$.

Training involves $K$ cycles with a layerwise scheme. At the cycle $(K-k+1)$ (with $k=K,\ldots,1$), the forward diffusion with $U^{(j)}_1$ to $U^{(j)}_{k-1}$ generates the noisy ensemble $\cS_{k-1} = \left\{ \ket{\psi^{(k-1)}_j}  \right\}_j$.
Parameters of $V(\bth_k)$ are optimized to make the denoised ensemble $\tcS_{k-1} = \left\{\ket{\tpsi^{(k-1)}_j}\right\}_j$ approximate $\cS_{k-1}$ via the minimization of the cost function $D(\cS_{k-1}, \tcS_{k-1})$.
After optimization, the parameters $\bth_k$ are fixed for use in the next cycle to optimize $\bth_{k-1}$.
This layerwise training approach divides the original training problem into $K$ manageable sub-tasks, ensuring convergence for incremental distribution transitions and mitigating issues such as barren plateaus~\cite{mcclean:barren:2018}.

%\subsection{Cost Function}
The cost function in QuDDPM measures the similarity between two quantum state ensembles using
a symmetric, positive definite quadratic kernel $\kappa(\ket{\mu}, \ket{\phi})$. This kernel can be defined by the state fidelity computed via the SWAP test (Fig.~\ref{fig:SWAP}) or directly derived from the classical shadows kernel~\cite{Huang:2020:shadows, chinzei:2025:shadows}.
%, or the classical shadows kernel, which enables efficient classical-based computation using classical shadows data.
We consider two cost functions $D_{\textrm{MMD}}$ and $D_{\textrm{Wass}}$, corresponding to the MMD distance and the 1-Wasserstein distance based on the state fidelity.

The MMD distance between two state ensembles $\cX=\{\ket{\mu_i}\}$ and $\cY=\{\ket{\psi_j}\}$ is defined as
\begin{align}
    \cD_{\textrm{MMD}}(\cX, \cY) = \bar{\kappa}(\cX, \cX) + \bar{\kappa}(\cY, \cY) - 2\bar{\kappa}(\cX, \cY),
\end{align}
where $\bar{\kappa}(\cX, \cY) = \mathbb{E}_{\ket{\mu} \in \cX, \ket{\phi} \in \cY}[\kappa(\ket{\mu}, \ket{\phi})]$.

The 1-Wasserstein distance is presented as an enhancement when the MMD distance is not feasible for distinguishing two state ensembles. Given normalized $\kappa$ (i.e., $\kappa(\ket{\phi}, \ket{\phi}) = 1 \forall \ket{\phi}$), the pairwise cost matrix $\bC = (C_{i,j}) \in \bbR^{|\cX|\times |\cY|}$ is computed as $C_{i,j} = 1 - \kappa(\ket{\mu_i}, \ket{\psi_j})$. The 1-Wasserstein distance is calculated via the formulation of the optimal transport problem as a linear programming procedure to find the optimal transport plan $\bP = (P_{i,j}) \in \bbR^{|\cX|\times |\cY|}$:
\begin{align}
    \cD_{\textrm{Wass}}(\cX, \cY) = \min_{\bP}\sum_{i,j}P_{i,j}C_{i,j},\\
    \textrm{s.t. } \bP\boldsymbol{1}_{|\cY|} = \ba, \quad \bP^{\top}\boldsymbol{1}_{|\cX|} = \bb, \quad \bP \geq 0. 
\end{align}
Here, $\boldsymbol{1}_{|\cX|}$ and $\boldsymbol{1}_{|\cY|}$ are all-ones vectors with size $|\cX|$ and $|\cY|$, respectively, and $\ba \in \bbR^{|\cX|}$ and $\bb \in \bbR^{|\cY|}$ are the probability vectors histogram corresponding to $\cX$ and $\cY$. Normally, we set uniform histograms as $\ba = \tfrac{1}{|\cX|} \boldsymbol{1}_{|\cX|}$ and $\bb = \tfrac{1}{|\cY|} \boldsymbol{1}_{|\cY|}$.

% The MMD distance between two state ensembles $\cX=\{\ket{\mu_i}\}$ and $\cY=\{\ket{\psi_j}\}$ is defined as
% $
%     D_{\textrm{MMD}}(\cX, \cY) = \bar{\kappa}(\cX, \cX) + \bar{\kappa}(\cY, \cY) - 2\bar{\kappa}(\cX, \cY),
% $
% where $\bar{\kappa}(\cX, \cY) = \mathbb{E}_{\ket{\mu} \in \cX, \ket{\phi} \in \cY}[\kappa(\ket{\mu}, \ket{\phi})]$.
% The 1-Wasserstein distance $D_{\textrm{Wass}}(\cX, \cY)$ is further presented as an enhancement in the situation where the MMD distance is not feasible to distinguish two state ensembles. Given normalized $\kappa$ (i.e., $\kappa(\ket{\phi}, \ket{\phi}) = 1 \forall \ket{\phi}$), the pairwise cost matrix $\bC = (C_{i,j}) \in \bbR^{|\cX|\times |\cY|}$ is computed as $C_{i,j} = 1 - \kappa(\ket{\mu_i}, \ket{\psi_j})$. The 1-Wasserstein distance is calculated via the formulation of the optimal transport problem into a linear programming procedure to find the optimal transport plan.

\section{Proposal: Chaotic Quantum Diffusion}

RUCD requires significant computational overhead to design random gate sequences, making it unsuitable for many quantum systems, particularly analog platforms such as Rydberg atom arrays and ultracold atoms in optical lattices. 
These platforms excel at simulating continuous-time dynamics using global Hamiltonians but lack the fine-grained, time-dependent control required for arbitrary gate sequences. Implementing RUCD on analog hardware would require digitizing the process via Trotterization or pulse shaping, introducing approximation errors and increasing susceptibility to noise from environmental couplings.

We address this challenge by adopting the projected ensemble framework~\cite{choi:nature:2023,cotler:prxquantum:2023,mok:optimal:2025}, which utilizes a single chaotic many-body wave function to generate a random ensemble of pure states on a subsystem. 
In this framework, projective measurements are performed on the larger subsystem of a bipartite state undergoing quantum chaotic evolution. This process yields a set of pure states on the smaller subsystem, accompanied by their respective Born probabilities. Collectively, these states form the projected ensemble, which converges to a state design when the measured subsystem is sufficiently large. In this context, a chaotic Hamiltonian is characterized by non-integrability, ergodic dynamics, spectral properties resembling those of random matrix theory, and strong entanglement generation.
Recent works have further clarified the role of symmetry in the Hamiltonian and measurement structure in the emergence of state designs from projected ensembles~\cite{varikuti:2024:unravel}.

Our approach offers significant advantages over RUCD because it requires only global and time-independent control. It is more accessible and adaptable to a wide range of quantum systems, including analog platforms, where implementing precise gate sequences is challenging.
We emphasize that the requirement of only global, time-independent control is a decisive
advantage on analog platforms, where arbitrary, finely time-resolved gate sequences are
difficult to realize. On gate-based digital platforms this specific control argument is
weaker, since arbitrary gates are natively available. However, reusing a single fixed
Hamiltonian evolution still removes the per-step circuit-synthesis and recompilation overhead
incurred by random unitary circuits.

\subsection{Projected Ensemble}
We consider a many-body system partitioned into a subsystem $\cM$ (with $n_d$ qubits) and its complement $\cF$ (with $n_f$ qubits). Given a generator state $\ket{\Phi}$ on the total system $\cM+\cF$, which is produced by chaotic evolution,  we perform local measurements on $\cF$, typically in the computational basis. This yields different pure states $\ket{\Phi_\cM(\bz_\cF)}$ on $\cM$, each corresponding to a distinct measurement outcome $\bz_\cF$, which are bitstrings of the form, for example, $\bz_\cF=001\ldots010$. 
The collection of these states, together with probabilities $p(\bz_\cF) = \| \left( \bbI_\cM \otimes \bra{\bz_\cF}\right) \ket{\Phi}\|^2$, forms the projected ensemble on $\cM$:  $\left\{ \ket{\Phi_\cM(\bz_\cF)}, p(\bz_\cF)\right\}_{\bz_{\cF}}$.
The projected ensemble provides a full description of the total system state as
$\ket{\boldsymbol{\Phi}} = \sum_{\bz_\cF} \sqrt{p(\bz_\cF)} \ket{\Phi_\cM(\bz_\cF)} \otimes \ket{\bz_\cF}$.
The ensemble of pure states is not just a density matrix. The density matrix represents the average mixed state over the distribution and captures first-order statistics such as expectation values of observables. Therefore, different ensembles can lead to the same density matrix. These ensembles can be distinguished through their higher moments of the observable expectation values.

Chaotic dynamics scramble quantum information, producing universal correlations and randomness within the subsystem. For a case of infinite-temperature thermalization, with sufficiently large $n_f = \Omega(m n_d)$ and the generator state $\ket{\Phi}$ obtained by quenched time evolution of chaotic Hamiltonians, the projected ensemble approximates $m$-design of Haar-random states~\cite{choi:nature:2023,cotler:prxquantum:2023}.
This convergence is quantified by the trace distance $\frac{1}{2} \| \rho^{(m)}_E - \rho^{(m)}_{\textrm Haar}\|_1 \to 0$ for the $m$-th moment operators $\rho^{(m)}_E$ and $\rho^{(m)}_{\textrm Haar}$ calculated from the projected ensemble and the Haar-random states ensemble, respectively.
This relies on the eigenstate thermalization hypothesis and the concentration of measure: the chaotic evolution scrambles quantum information universally, making higher moments converge to the Haar measure.

Inspired by this framework, we propose two schemes in which the diffusion is implemented through chaotic Hamiltonian evolution [Fig.~\ref{fig:overview}(c)].
%A comparison of resources between RUCD and our proposed methods is summarized in Table~\ref{tab1}.
%We further introduce the classically-enhanced projected ensemble to accelerate the diffusion process.

\subsection{Cumulative Time Evolution Diffusion (CTED)}
For each $\ket{\psi^{(0)}_j} \in \cS_0$ on $\cM$, we implement a diffusion process to obtain the $k$-step state $\ket{\psi^{(k)}_j} \in \cS_k$.
The initial state $\ket{\bx^{(k)}_j}$ in $\cF$ is randomly sampled from the ensemble $\{\ket{\bx}, q(\bx)\}_{\bx \in \{0,1\}^{n_f}}$ of computational basis states with probability $q(\bx)$. 
Inspired by Ref.~\cite{mok:optimal:2025}, we introduce $q(x)$ here to increase classical randomness injected during the protocol, which is converted into quantum randomness of the resulting state ensemble.
We use the uniform distribution in our experiments because it is the maximum-entropy choice over the computational basis and therefore injects the most classical randomness per qubit, making it the most efficient option for driving the ensemble toward its universal form. 
A more concentrated $q(\bx)$ injects less randomness and can bias the projected ensemble toward its own structure when scrambling is incomplete. 
We regard $q(\bx)$ as a tunable knob rather than a fixed requirement, since in the strongly-chaotic limit the projected ensemble approaches the same universal distribution largely independently of $q(\bx)$.

The input $\ket{\psi^{(0)}_j}\otimes \ket{\bx^{(k)}_j}$ of the system $(\cM+\cF)$ is evolved cumulatively  under the same unitary $e^{-iHk\Delta t}$ for all data indexes $j$, where $H$ is a fixed chaotic Hamiltonian~\cite{cotler:prxquantum:2023}.
Subsequently, a random local measurement is performed on $\cF$, yielding the measurement record $\bz^{(k)}_j$ with the probability $p_\bx(\bz^{(k)}_j)=q(\bx_j^{(k)})\left\| \left( \bbI_\cM \otimes \bra{\bz^{(k)}_j}\right) e^{-iHk\Delta t}\left(\ket{\psi_j^{(0)}} \otimes \ket{\bx_j^{(k)}}\right)\right\|^2$, and the resulting state $\ket{\psi^{(k)}_j}$ on $\cM$.
In this case, we obtain the classically-enhanced projected ensemble $\cE^{(k)} = \{q(\bx_j^{(k)})p_\bx(\bz_j^{(k)}); \ket{\psi_\bx(\bz_j^{(k)})}\}_j$.

Since the execution time for $e^{-iHk\Delta t}$ generally scales with $k$, we write it as $k\tau_c$, where $\tau_c$ denotes the execution time per unit of evolution time. This CTED requires $K$ unitaries and $\tau_c N \sum_{k=1}^{K} k = \tau_c NK(K+1)/2$ execution time to generate all $\cS_k$.

\begin{figure}
  %\vspace{-\intextsep} % pull up to align nicely with the text (optional)
  \centering
  \includegraphics[width=\linewidth]{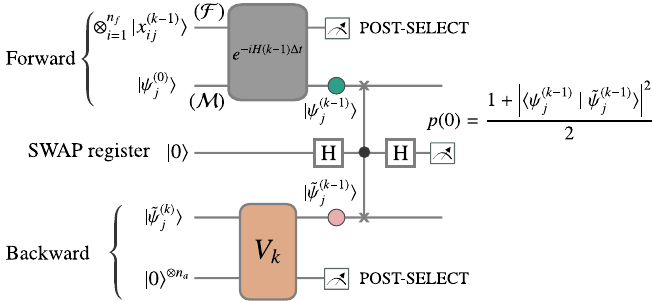}
  \caption{The schematic circuit to compute the fidelity between the forward state $\ket{\psi_j^{(k-1)}}$ and the denoised state $\ket{\tilde{\psi}_j^{(k-1)}}$ using the SWAP test.}
  \label{fig:SWAP}
\end{figure}

In Fig.~\ref{fig:SWAP}, we depict the schematic circuit to compute 
the state fidelity between the forward state $\ket{\psi_j^{(k-1)}}$ and the denoised state $\ket{\tilde{\psi}_j^{(k-1)}}$ can be computed using the SWAP test.
The SWAP test consists of two Hadamard gates and a controlled-swap gate applied on $2n_d+1$ qubits. The probability of measure 0 on the SWAP register is $p(0)=\frac{1}{2} + \frac{1}{2} \left| \braket{\psi_j^{(k-1)} \mid  \tilde{\psi}_j^{(k-1)}} \right|^2$, which directly derives the fidelity.
We apply a post-selection protocol to the projective measurements in the forward and backward processes, ensuring $\ket{\psi_j^{(k-1)}}$ and $\ket{\tilde{\psi}_j^{(k-1)}}$ remain consistent across SWAP test measurements.

\subsection{Repeated Time Evolution Diffusion (RTED)}
Unlike the CTED, we consider the input of $(\cM + \cF)$ as $\ket{\psi^{(k-1)}_j}\otimes \ket{\bx^{(k)}_j}$, where $\ket{\psi^{(k-1)}_j} \in \cS_{k-1}$, then evolve this state under the same unitary $e^{-iH\Delta t}$ for all data indexes $j$.
We then repeat this process for $k=1,\ldots, K$.
Assuming the execution time for $e^{-iH\Delta t}$ is $\tau_r$, to generate each $\cS_k$, we need to repeat $e^{-iH\Delta t}$ for $k$ times, making the execution time is $k\tau_r$.
Therefore, this RTED requires only one unitary but $\tau_r N \sum_{k=1}^K k = \tau_r NK(K+1)/2$ execution time to generate all $\cS_k$.

In contrast to CTED, which requires a distinct cumulative propagator $e^{-iHk\Delta t}$ for each step $k$, RTED reuses a single short-time unitary $e^{-iH\Delta t}$. On a digital device, the CTED propagator $e^{-iHk\Delta t}$ can be realized as the first-order Trotter product $(e^{-iH\Delta t})^k$. This Trotterization applies to the coherent evolution within a single diffusion step. However, the diffusion process itself is not a single Trotter trajectory, since each diffused ensemble $\cS_k$ is prepared independently from $\cS_0$ and the projective measurement on $\cF$ intervenes between steps.

\subsection{Universal Ensemble}
In the CTED and RTED schemes, the generator state at each diffusion step is generally not at infinite temperature. As a result, the forward diffusion operates in a finite-temperature regime where convergence to the Haar measure is not expected~\cite{cotler:prxquantum:2023}. In CTED, longer evolution times ($k\Delta t$) allow more scrambling.
However, because the dynamics start from finite-energy states, the resulting projected ensembles remain constrained by energy conservation and saturate at a finite-temperature universal distribution. In RTED, resets on diffused states preserve some initial structure across steps, further deviating from full Haar scrambling.
If the target distribution $\cE_0$ were already Haar-random, the projected ensembles would converge to approximate $m$-designs, but in the general case, the target distributions are often low-entropy, making the projected ensemble finite-temperature-like.

Recent work has shown that, under chaotic many-body dynamics at finite energy density, projected ensembles generically converge to Scrooge ensembles~\cite{mcginley:2025:scrooge:ensemble,scrooge:2025:prb}. These ensembles are universal in that their moments for local observables are fixed by the Eigenstate Thermalization Hypothesis (ETH)~\cite{scrednicki:ETH:1994, alessio:ETH:2016, Deutsch:2018:ETH} and are therefore independent of the microscopic details of the chaotic Hamiltonian, while remaining maximally constrained by conserved quantities such as energy~\cite{mok:2026:scrooge}. From an information-theoretic viewpoint, a Scrooge ensemble is the maximum-entropy distribution consistent with these macroscopic constraints. It embodies the minimal randomness the dynamics must generate, rather than the full Haar randomness of an unconstrained system.

The Scrooge-like nature of the diffuse ensemble affects backward denoising but does not hinder trainability. In contrast to RUCD, where diffusion toward fully Haar-random states can lead to barren-plateau behavior in deep variational quantum circuits, the energy-dependent structure retained in finite-temperature diffusion may provide a smoother learning landscape. Our numerical results demonstrate that, despite deviating from Haar scrambling, CTED and RTED achieve generative accuracy comparable to RUCD while operating in a more physically realistic and information-efficient diffusion regime.

% \begin{table*}[htbp]
% \small
% \caption{Resource comparison between quantum diffusion methods with $K$ diffusion steps on $N$ samples of $n_d$-qubit data}
% \begin{center}
% \begin{tabular}{|c|c|c|c|}
% \hline
%  & \textbf{{(Prev.) RUCD}}& \textbf{{(Our) CTED}}& \textbf{{(Our) RTED}} \\
% \hline
% Number of unitary types 
% & $NK$ & $K$ & $1$ \\
% \hline
% Number of ancilla qubits
% & $0$ & $\Omega(n_d)$ & $\Omega(n_d)$ \\
% \hline
% \begin{tabular}{c}
% \\Execution time of the forward diffusion \\ ~
% \end{tabular}
%  & $\dfrac{\tau_u NK(K+1)}{2}$ & $\dfrac{\tau_c NK(K+1)}{2}$ & $\dfrac{\tau_r NK(K+1)}{2}$ \\
% \hline
% \end{tabular}
% \label{tab1}
% \end{center}
% \end{table*}

\section{Results}

We conduct numerical experiments using three illustrative datasets: synthetic distributions of clustered quantum states, circular quantum states, and a quantum distribution derived from a chemistry dataset. We simulate the quantum circuits with the TensorCircuit library~\citep{zhang:2023:tensorcircuit}, and rely on JAX~\citep{jax:2018:github} for automatic differentiation to support gradient-based optimization. The circuit parameters are initialized uniformly within $[-\pi, \pi]$, and optimization is performed using the Adam algorithm with a learning rate of $0.001$.
Source code to reproduce these experiments is available in the GitHub repository~\cite{tran:2026:github}.

For CTED and RTED, the dynamics are governed by a one-dimensional mixed-field Ising Hamiltonian with open boundary conditions, defined on the total system comprising $n_d + n_f$ sites as
\begin{align}\label{eqn:Ising:Ham}
        H = \sum_{j=1}^{n_d+n_f} \left(h^x\sigma^x_j + h^y\sigma^y_j\right) + J\sum_{j=1}^{n_d+n_f-1}\sigma^x_j\sigma^x_{j+1}.
\end{align}
Here, $\sigma^\mu_j$ (with $\mu = x, y, z$) denotes the Pauli operators at site $j$, $J$ represents the interaction strength, and $h^x$ and $h^y$ are the strengths of the longitudinal and transverse magnetic fields, respectively. 
This choice is a canonical minimal model of quantum chaos. As a well-characterized Hamiltonian, it lets us attribute the observed behavior to the diffusion mechanism rather than incidental model-specific features. Its nearest-neighbor Ising couplings and global transverse and longitudinal fields also map directly onto interactions available in Rydberg-atom arrays and trapped-ion simulators, making it experimentally relevant. However, any sufficiently chaotic, hardware-compatible Hamiltonian is expected to serve equally well as a diffusion engine.

In the presence of a non-zero longitudinal field ($h^x \neq 0$), the Hamiltonian exhibits ergodic behavior~\cite{cotler:prxquantum:2023}, with its eigenvalues and eigenvectors conforming to the predictions of the ETH.
To quantify the chaoticity of the Hamiltonian, we use the average adjacent-gap ratio $\braket{r}$. Let $\{E_n\}$ be the eigenvalues of $H$ within a fixed symmetry sector, sorted in increasing order, and let $s_n = E_{n+1} - E_n \geq 0$ be the consecutive level spacings. The gap ratio at level $n$ is $r_n=\frac{\min(s_n,s_{n-1})}{\max(s_n,s_{n-1})}\in[0,1]$, and $\braket{r}$ denotes its average over the spectrum (discarding the spectral edges).
We compute $\braket{r}$ over the $(h^x, h^y)$ plane (Fig.~\ref{fig:Ising:map}), where $\braket{r} \approx 0.53$ signals Wigner-Dyson (chaotic) statistics and $\braket{r}\approx 0.39$ signals Poisson (integrable) statistics.
In our experiments, we adopt the operating point $(h^x, h^y) = (0.8090, 0.9045)$, $J = 1.0$, which lies well inside the chaotic region (marked by the star symbol in Fig.~\ref{fig:Ising:map}).

We discretize the time evolution using a time step $\Delta t=0.02$ in units of $J^{-1}$ over $K$ diffusion steps. This choice balances two requirements: $\Delta t$ should be small enough that successive diffused ensembles evolve smoothly, so that layerwise backward training involves well-conditioned incremental sub-tasks, while also being large enough for the total evolution time $K\Delta t$ to reach the finite-temperature steady regime within a practical number of steps $K$. When the fields $(h^x,h^y)$ are varied, $\Delta t$ can be rescaled by the spectral width of the Hamiltonian, keeping the per-step scrambling approximately fixed across parameter choices.

\begin{figure}
\centering
\includegraphics[width=1.0\linewidth]{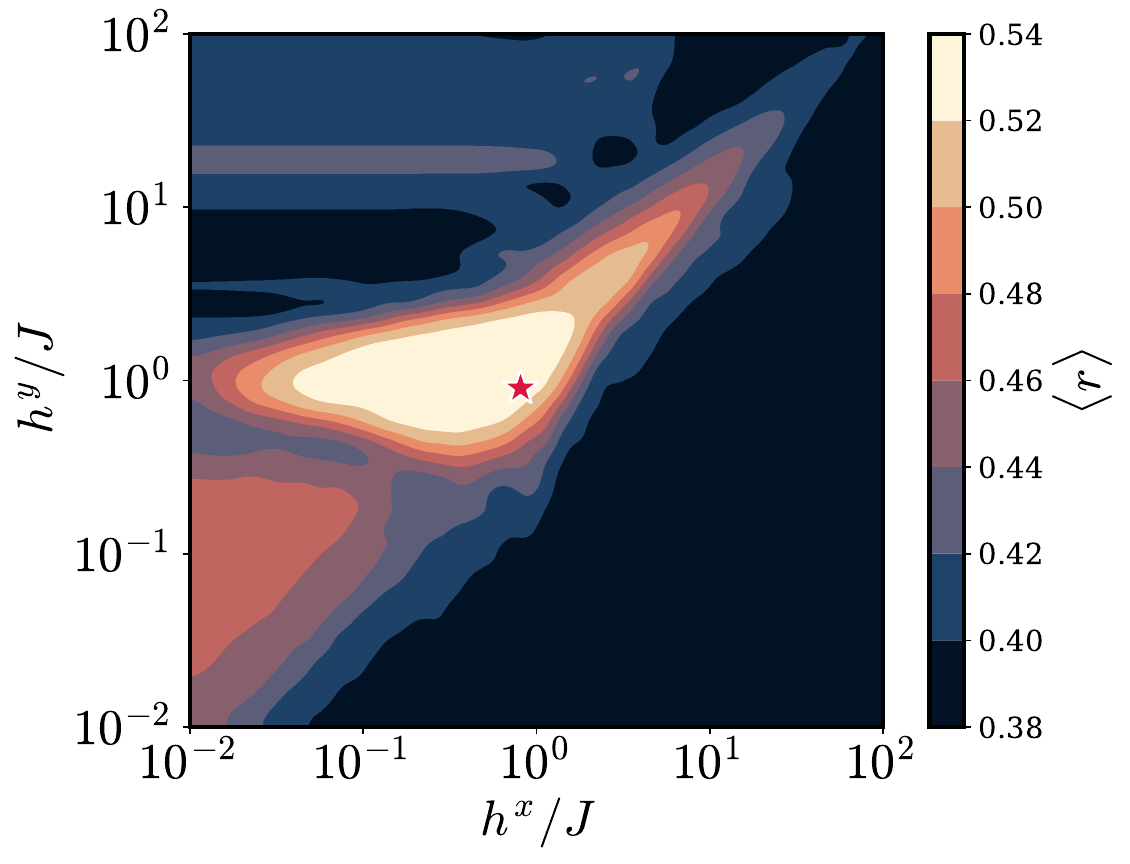}
\caption{Average adjacent-gap ratio $\braket{r}$ of the mixed-field Ising Hamiltonian in Eq.~\eqref{eqn:Ising:Ham} across the $(h^x, h^y)$ plane at $J=1$ and the total system size $n_d + n_f=14$. $\braket{r} \approx 0.53$ indicates chaotic and Wigner-Dyson statistics, where $\braket{r} \approx 0.39$ indicates integrable and Poisson statistics. The star marks the operating point $(h^x, h^y) = (0.8090, 0.9045)$, well inside the chaotic region.}
\label{fig:Ising:map}
\end{figure}

For RUCD, we adopt the fast scrambling circuits from Ref.~\cite{zhang:qddpm:prl:2024}, applying $\prod_{l=1}^k U^{(j)}_l$ to each initial state $\ket{\psi^{(0)}_j}$. Here, $U^{(j)}_l=\Omega_l(s^{(j)}_l) W_l(\bg^{(j)}_l)$, where $W_l(\bg^{(j)}_l)$ implements single-qubit rotations as
\begin{align}
    W_l(\bg^{(j)}_l) = \bigotimes_{q=1}^{n_d}e^{-ig^{(j)}_{l,3q-1}\frac{Z_q}{2}}e^{-ig^{(j)}_{l,3q-2}\frac{Y_q}{2}}e^{-ig^{(j)}_{l,3q-3}\frac{Z_q}{2}},
\end{align}
and an entangling layer $\Omega_l(s^{(j)}_l)$ applies ZZ rotations across all qubit pairs as
\begin{align}
    \Omega_l(s^{(j)}_l) = \prod_{q_1<q_2}\exp\left[\tfrac{-is^{(j)}_l}{2\sqrt{n_d}}Z_{q_1}Z_{q_2}\right].
\end{align}
The ranges of uniformly random angles $g^{(j)}_{l,q}$ and $s^{(j)}_l$ are tuned as $[-\alpha\pi/8,\alpha\pi/8]$ and $[0.4\alpha, 0.6\alpha]$, respectively. Here, $\alpha$ is the scaling parameter to control the diffusion speed, which is set as $\alpha = k^2/100$ at the $k$-th step.

In the backward process with $n_a$ ancilla qubits, each $V_k$ is constructed using a Hardware Efficient Ansatz on $n=n_d + n_a$ qubits with $L$ layers as
$V_k(\bth_k) = \prod_{l=1}^L \tilde{\Omega} \tilde{W}_l(\bth^{(l)}_k)$,
where
\begin{align}
    &\tilde{W}_l(\bth^{(l)}_k) = \bigotimes_{q=1}^{n}e^{-i\theta^{(l)}_{k,2q-1}\frac{Y_q}{2}} e^{-i\theta^{(l)}_{k,2q-2}\frac{X_q}{2}},\\
&\tilde{\Omega} = \bigotimes_{q=1}^{\lfloor (n-1)/2 \rfloor} CZ_{2q,2q+1} \bigotimes_{q=1}^{\lfloor n/2 \rfloor} CZ_{2q-1,2q}.
\end{align}

\subsection{Learning Multi-cluster and Circular Distribution}\label{subsec:cluster:circular}

\begin{figure*}
\centering
\includegraphics[width=1.0\linewidth]{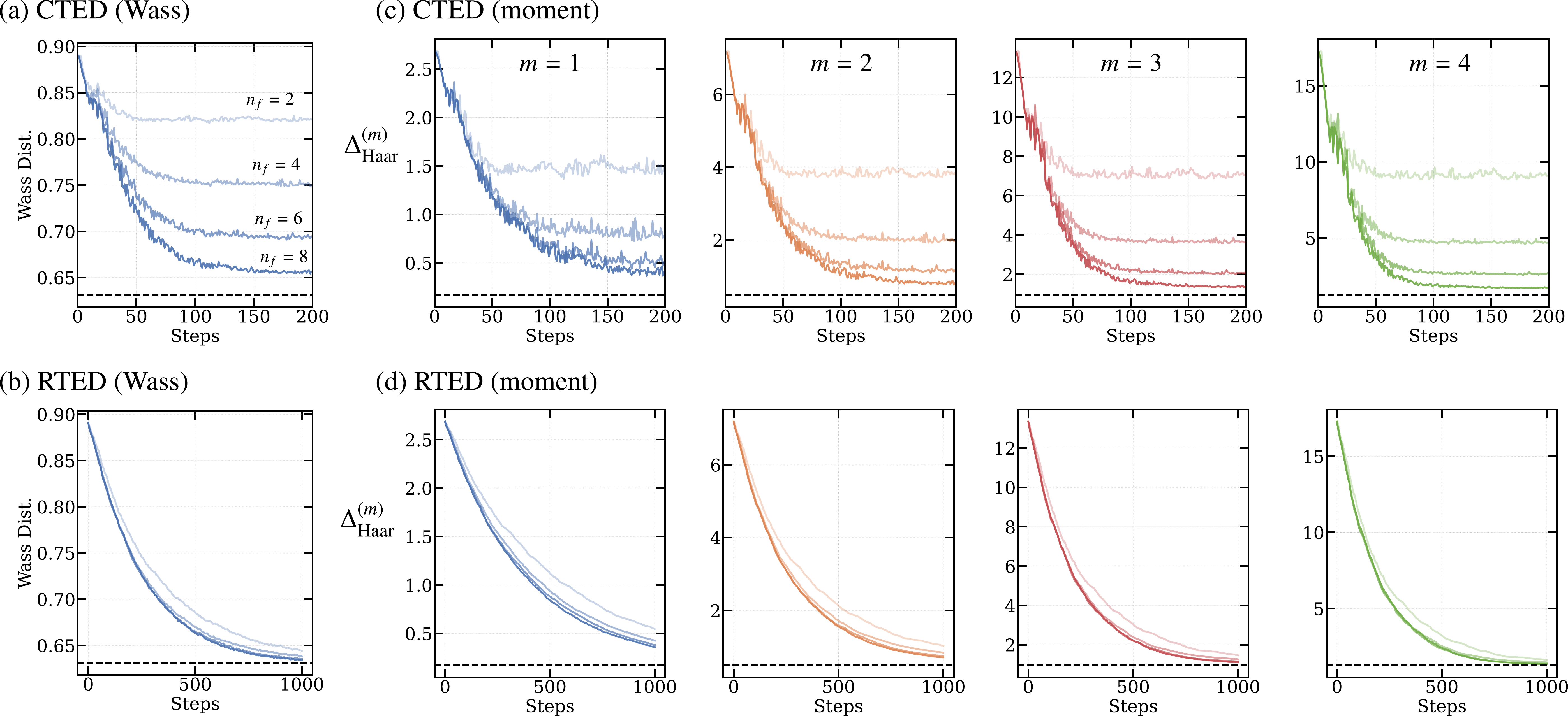}
\caption{1-Wasserstein distance and normalized moment distance $\Delta_{\textrm{Haar}}^{(m)}(k)$ between the diffused ensemble and the Haar-random ensemble versus diffusion step $k$, for CTED and RTED. Results are shown for different complement sizes $n_f \in \{2, 4, 6, 8\}$ and moment orders $m$. The dashed horizontal line in each panel is the finite-sample Haar floor. It depends only on the sample size $N$ and the moment order $m$ and is independent of $n_f$.}
\label{fig:evol}
\end{figure*}

\begin{figure*}
\centering
    \includegraphics[width=1.0\linewidth]{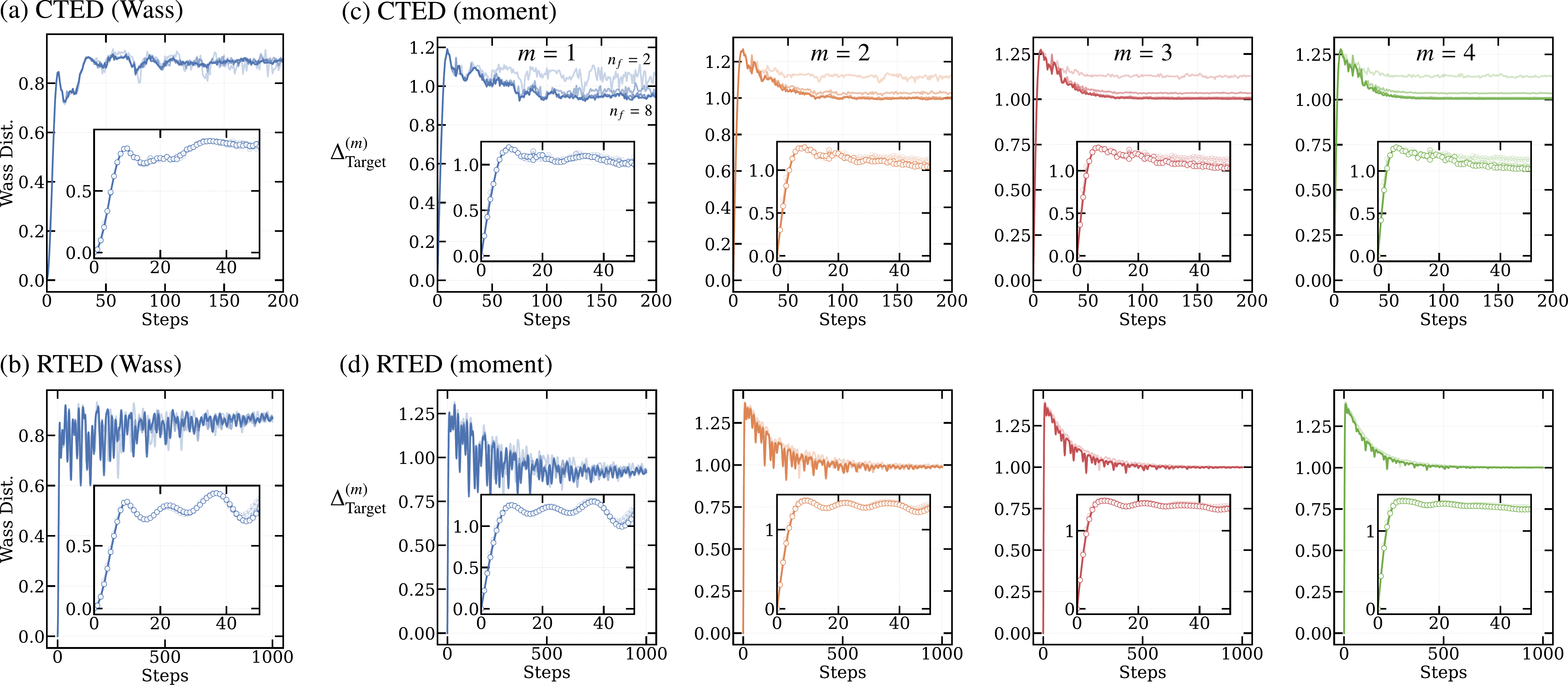}
\caption{1-Wasserstein distance and normalized moment distance $\Delta_{\textrm{Haar}}^{(m)}(k)$ between the diffused ensemble and the target multi-cluster ensemble versus diffusion step $k$, for CTED and RTED. Results are shown for different complement sizes $n_f \in \{2, 4, 6, 8\}$ and moment orders $m$. Inset plots display values at diffusion step $k\leq 50$.}
\label{fig:evol:target}
\end{figure*}

We conduct numerical experiments to generate multi-cluster and circular ensembles of $n_d$-qubit quantum states ($n_d=4$).
The multi-cluster ensemble is a mixture of pure states centered around distinct clusters, modeling multimodal quantum data relevant to fields like quantum chemistry and quantum error correction. 
The target distribution is a mixture of three clusters: 40\% from cluster 1, 40\% from cluster 2, and 20\% from cluster 3, each consisting of \(n_d\)-qubit pure states. Cluster 1 centers on $|0\rangle^{\otimes n_d}$, cluster 2 on $|1\rangle^{\otimes n_d}$, and cluster 3 on the GHZ state $\frac{1}{\sqrt{2}} \left( |0\rangle^{\otimes n_d} + |1\rangle^{\otimes n_d} \right)$.
To mimic imperfections in quantum hardware, we incorporated noise into each cluster by applying random single-qubit rotations, with rotation angles sampled from a Gaussian distribution \(\mathcal{N}(0, \sigma^2)\) where \(\sigma = 0.05\).
For the circular ensemble, the quantum state has the form of $\ket{\psi(\beta)} = \cos(\beta/2)\ket{0...0}+\sin(\beta/2)\ket{1...1}$ of $n_d$ qubits, where $\beta \in [0, 2\pi)$.

We investigate the evolved dynamics of the diffused ensembles derived from CTED and RTED.
Given the classically-enhanced projected ensemble $\cE^{(k)} = \{q(\bx_j^{(k)})p_\bx(\bz_j^{(k)}); \ket{\psi_\bx(\bz_j^{(k)})}\}_j$, we can define the $m$-th moment operator for this ensemble as
\begin{align}
    \rho_k^{(m)} = \sum_j q(\bx_j^{(k)})p_\bx(\bz_j^{(k)}) \left( \ket{\psi_\bx(\bz_j^{(k)})}\bra{\psi_\bx(\bz_j^{(k)})} \right)^{\otimes m}.
\end{align}
We compute the normalized Hilbert-Schmidt distance to the $m$-th moment of the Haar ensemble and the target ensemble:
\begin{align}\label{eqn:norm:HS}
    &\Delta_{\textrm{Haar}}^{(m)}(k) = \dfrac{\| \rho_k^{(m)} - \rho^{(m)}_{\textrm{Haar}} \|_2}{\| \rho^{(m)}_{\textrm{Haar}} \|_2}, \\
    &\Delta_{\textrm{Target}}^{(m)}(k) = \dfrac{\| \rho_k^{(m)} - \rho^{(m)}_{\textrm{Target}} \|_2}{\| \rho^{(m)}_{\textrm{Target}} \|_2}. 
\end{align}

Figures~\ref{fig:evol} 
and~\ref{fig:evol:target} characterize the forward diffusion dynamics induced by chaotic time evolution, quantified by the 1-Wasserstein distance and the normalized moment distances.
Figure~\ref{fig:evol} reports the distance between the diffused ensemble and the Haar-random ensemble for both CTED and RTED, while varying the number of diffusion steps and the complement-system size $n_f \in \{2, 4, 6, 8\}$.
For CTED, all metrics exhibit a rapid initial decay followed by saturation at finite values that depend systematically on $n_f$ and the moment order $m$. Increasing $n_f$ lowers the saturation plateau, indicating that deviations from Haar randomness are dominated by finite-size effects associated with the complement system. Higher-order moments saturate at larger residual values, reflecting the persistence of non-Haar correlations at finer statistical resolution.
RTED shows slower decay with increasing diffusion steps but continues to approach smaller distances at long times. This behavior arises because at the $k$-th diffusion step, RTED repeatedly applies the same short-time unitary evolution while coupling the main system to independently initialized complement subsystems. The effective environment size grows proportionally to $kn_f$, progressively suppressing finite-size effects.

To place these saturation values on an absolute scale, we compare them against a
reference baseline (dashed lines in Fig.~\ref{fig:evol}) given by the residual distance
between two independent finite ensembles of $N$ Haar-random states. This baseline is the
smallest distance attainable at finite sample size and therefore quantifies the fundamental
limit of randomness set by the sampling constraints. 
For the normalized Hilbert--Schmidt moment distance in Eq.~\eqref{eqn:norm:HS}, it takes the closed form
\begin{align}
  \Delta^{(m)}_{\mathrm{floor}}
  = \sqrt{\frac{2\,(1-1/D_m)}{\,1+(N-1)/D_m\,}}
  \;\xrightarrow{\,N\gg D_m\,}\;
  \sqrt{\frac{2D_m}{N}},
  \label{eq:hs_floor}
\end{align}
where $D_m=\binom{D+m-1}{m}$ is the dimension of the symmetric subspace of $m$ copies of the
$D=2^{n_d}$-dimensional Hilbert space. 
For the fidelity-based $1$-Wasserstein distance, an analogous estimate gives
\begin{align}
  W_{\mathrm{floor}}\approx N^{-1/(D-1)}.
  \label{eq:wass_floor}
\end{align}

We now present the derivation for Eqs.~\eqref{eq:hs_floor}\eqref{eq:wass_floor}. 
For two ensembles of pure states $A=\{|a_i\rangle\}_{i=1}^{N_A}$ and
$B=\{|b_j\rangle\}_{j=1}^{N_B}$, let $\hat{\rho}^{(m)}_A=\frac{1}{N_A}\sum_i
(|a_i\rangle\langle a_i|)^{\otimes m}$ denote the empirical $m$-th moment operator, and
define the frame-potential estimator
\begin{align}
  \hat{\mathcal{F}}^{(m)}(A,B)=\mathrm{Tr}\big[\hat{\rho}^{(m)}_A\hat{\rho}^{(m)}_B\big]
  =\frac{1}{N_A N_B}\sum_{i,j}|\langle a_i|b_j\rangle|^{2m}.
\end{align}

The normalized moment distance is then
$\Delta_A^{(m)}=\|\hat{\rho}^{(m)}_A-\hat{\rho}^{(m)}_B\|_2/\|\hat{\rho}^{(m)}_A\|_2$, where $\|\hat{\rho}^{(m)}_A-\hat{\rho}^{(m)}_B\|_2^2
=\hat{\mathcal{F}}(A,A)+\hat{\mathcal{F}}(B,B)-2\hat{\mathcal{F}}(A,B)$. For the baseline, we consider $A$ and $B$ as two independent
ensembles of $N$ Haar-random states.
Let $\Pi_{\mathrm{sym}}$ be the projector onto the symmetric subspace of $(\mathbb{C}^D)^{\otimes m}$
(the states invariant under permuting the $m$ copies), with
$D_m\equiv\mathrm{Tr}\,\Pi_{\mathrm{sym}}=\binom{D+m-1}{m}$ and $D=2^{n_d}$. Using the Haar identity
$\mathbb{E}_{\psi}(|\psi\rangle\langle\psi|)^{\otimes m}=\Pi_{\mathrm{sym}}/D_m$ and the fact that
any product state satisfies $\Pi_{\mathrm{sym}}|\phi^{\otimes m}\rangle=|\phi^{\otimes m}\rangle$,
\begin{align}
  \mathbb{E}_{\psi,\phi}\,|\langle\psi|\phi\rangle|^{2m}
  =\big\langle\phi^{\otimes m}\big|\tfrac{\Pi_{\mathrm{sym}}}{D_m}\big|\phi^{\otimes m}\big\rangle
  =\frac{1}{D_m}.
  \label{eq:haar_overlap}
\end{align}

The frame potentials are evaluated on finite random draws and therefore fluctuate from sample to sample. 
The baseline we report is the typical floor, so we replace each estimator by its Haar expectation. This is justified because each $\hat{\mathcal{F}}$ is an average of $N^2$ bounded terms and concentrates about its mean as $N\to\infty$. 
Separating the $N$ diagonal terms ($|\langle\psi_i|\psi_i\rangle|^{2m}=1$) from the off-diagonal ones gives
$\mathbb{E}[\hat{\mathcal{F}}(A,A)]=\tfrac{1}{N}+(1-\tfrac{1}{N})\tfrac{1}{D_m}$ and
$\mathbb{E}[\hat{\mathcal{F}}(A,B)]=\tfrac{1}{D_m}$. 
The expected squared numerator is thus $\tfrac{2}{N}(1-\tfrac{1}{D_m})$, and normalizing by $\mathbb{E}[\hat{\mathcal{F}}(A,A)]$ yields Eq.~\eqref{eq:hs_floor}.

% For the fidelity-based cost $C_{ij}=1-|\langle a_i|b_j\rangle|^2$ with uniform marginals, the
% optimal transport reduces to a linear assignment,
% $W=\tfrac{1}{N}\min_{\sigma}\sum_i C_{i\sigma(i)}$, so the floor is governed by the best match
% of each reference state. Let $f$ denote a value of the pairwise fidelity
% $|\langle a|b\rangle|^2$. For independent Haar states, its survival function is
% $P(|\langle a|b\rangle|^2>f)=(1-f)^{D-1}$. The largest fidelity $f_{\max}$ attained among the
% $\sim\!N$ candidate matches satisfies $(1-f_{\max})^{D-1}\sim 1/N$, giving the minimized cost
% $(1-f_{\max})\sim N^{-1/(D-1)}$ per pair and hence giving
% Eq.~(\ref{eq:wass_floor}).

For the fidelity-based cost $C_{ij}=1-|\langle a_i|b_j\rangle|^2$ with uniform marginals, the optimal transport reduces to a linear assignment, $W=\tfrac{1}{N}\min_{\sigma}\sum_i C_{i\sigma(i)}$, so the floor is governed by the best match of each reference state. 
Consider one reference state $|a\rangle$ and its $N$ candidate matches $\{|b_j\rangle\}$, and let $f$ denote a value of the pairwise fidelity
$F=|\langle a|b\rangle|^2$. 
For independent Haar states, the fidelity has density $(D-1)(1-f)^{D-2}$, so its survival function is $P(F>f)=(1-f)^{D-1}$.
The assignment pairs $|a\rangle$ with the candidate of largest fidelity, $F_{\max}=\max_{j}F_j$. To estimate the typical value $f_{\max}$ of $F_{\max}$, note that the expected number of candidates whose fidelity exceeds a level $f$ is
$\mathbb{E}\,\#\{j:F_j>f\}=N\,P(F>f)=N(1-f)^{D-1}$: for small $f$ this count is large (many
candidates clear the bar, so the maximum lies above $f$), while for large $f$ it is
$\ll 1$ (no candidate reaches it, so the maximum lies below $f$). 
The maximum therefore concentrates at the crossover level where the expected count is of order one,
\begin{equation}
  N\,P(F>f_{\max})\sim 1
  \Longleftrightarrow
  1-f_{\max}\sim N^{-1/(D-1)}.
  \label{eq:extreme_value}
\end{equation}
Since the minimized cost per pair is $C_{\min}=1-F_{\max}=1-f_{\max}$, the Wasserstein baseline is given by Eq.~\eqref{eq:wass_floor}.

These baselines depend only on $N$, $m$, and $D$, and are independent of the complement size $n_f$.
The saturation plateaus in Fig.~\ref{fig:evol} lie systematically above this baseline.
For our setting in Fig.~\ref{fig:evol} ($n_d=4$, $D=16$, $N=1000$), Eq.~(\ref{eq:hs_floor}) gives $\Delta^{(m)}_{\mathrm{floor}}=\{0.17,\,0.49,\,0.95,\,1.26\}$ for $m=\{1,2,3,4\}$, and Eq.~(\ref{eq:wass_floor}) gives $W_{\mathrm{floor}}\approx 0.63$.
This confirms that the ensembles do not converge to the Haar measure but instead approach a universal finite-temperature ensemble consistent with chaotic dynamics and conserved quantities. 
The gap narrows as $n_f$ grows, as the finite-temperature ensemble more closely reproduces the Haar statistics.
%he results do not imply the convergence to a Haar-random ensemble. 
%Instead, both approach a universal finite-temperature ensemble consistent with chaotic dynamics and conserved quantities.

Figure~\ref{fig:evol:target} compares the diffused ensemble directly to the target multi-cluster distribution. For both CTED and RTED, the distance to the target ensemble increases during the diffusion process and subsequently saturates at a finite value, indicating that the forward diffusion drives the ensemble away from the target distribution toward a stationary noisy ensemble.
Each step mixes the main system $\cM$ via chaotic dynamics and then discards information by measuring the complement system $\cF$. Repeating this drives the moment statistics toward a steady regime determined by the scrambling dynamics and the complement size $n_f$.
The approach to that steady regime is not governed by a single timescale: different components of the moment operator relax at different rates. When we measure a single scalar distance to the target, the fast components can move away from the target at early steps, and only later do slower components pull back toward the steady plateau.
CTED exhibits smoother convergence, whereas RTED shows larger temporal fluctuations due to repeated reinitialization of $\cF$, while reaching comparable asymptotic distances.

We consider the forward process with $K=50$ steps, $n_f=2$, and $n_a=2$ in the backward process.
We train each denoising operation $V_k$ for 1000 epochs on 1000 samples with a mini-batch size of 100, minimizing the 1-Wasserstein distance.
Theoretically, convergence of the projected ensemble toward an $m$-design requires $n_f=\Omega(m\,n_d)$, so larger $n_f$ lowers the
residual deviation (consistent with Fig.~\ref{fig:evol}, where increasing $n_f$ lowers the plateau) at the cost of more qubits and longer evolution. 
In practice, the requirement is milder than worst-case scaling: $n_f$ need only be large enough that the forward channel sufficiently ``forgets'' the input while the backward model remains trainable. 
We therefore recommend a simple operational criterion: increase $n_f$ until generative (backward) performance stops improving appreciably, rather than fine-tuning hyperparameters. 
For the datasets considered here, this occurs already at the modest value $n_f=2$ adopted above.
We further note that RTED relaxes the demand on $n_f$, because the effective environment grows as $\sim k\,n_f$ across repeated $k$ steps, so a smaller $n_f$ suffices to suppress finite-size effects.

\begin{figure*}
\centerline{\includegraphics[width=1.0\linewidth]{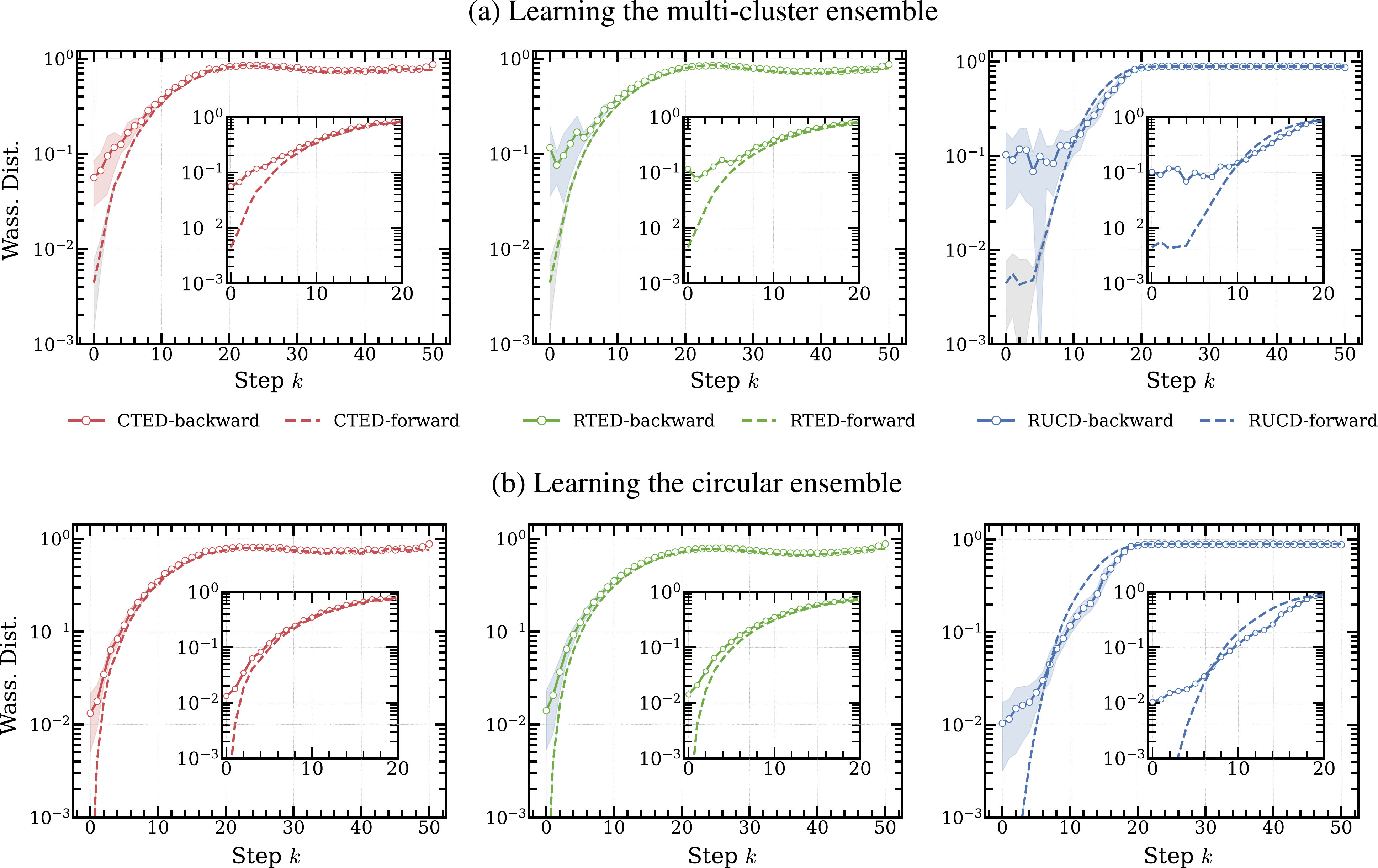}}
\caption{1-Wasserstein distances between the generated ensembles and the true ensemble over backward steps,  sampled from (a) multi-cluster and (b) circular datasets of quantum states, for CTED (left), RTED (center), and RUCD (right). 
The backward process (solid lines with circle markers) is compared to the forward diffusion (dotted lines). 
Lines and shaded areas represent the mean and standard deviation over ten trials. Inset plots display values at diffusion step $k\leq 20$.}
\label{fig:compare:QDM}
\end{figure*}

Figure~\ref{fig:compare:QDM} compares the backward generative performance of CTED, RTED, and RUCD using the 1-Wasserstein distance between the generated ensemble $\tcS^{\prime}_k$ and the true ensemble $\cS_0$, for both multi-cluster and circular quantum datasets. In general, the final distances are comparable for all methods. In particular, for CTED and RTED, the backward process closely mirrors the corresponding forward diffusion, showing smooth, monotonic convergence with relatively low variance across trials. This indicates that the chaotic diffusion dynamics generate structured noise that can be effectively inverted by the backward process. In contrast, RUCD exhibits significantly larger fluctuations and a pronounced mismatch between forward and backward curves, particularly at early steps, reflecting the difficulty of inverting near-Haar random diffusion. 

\subsection{Learning Quantum Chemistry Distribution}

\begin{figure*}
    \centerline{\includegraphics[width=0.9\linewidth]{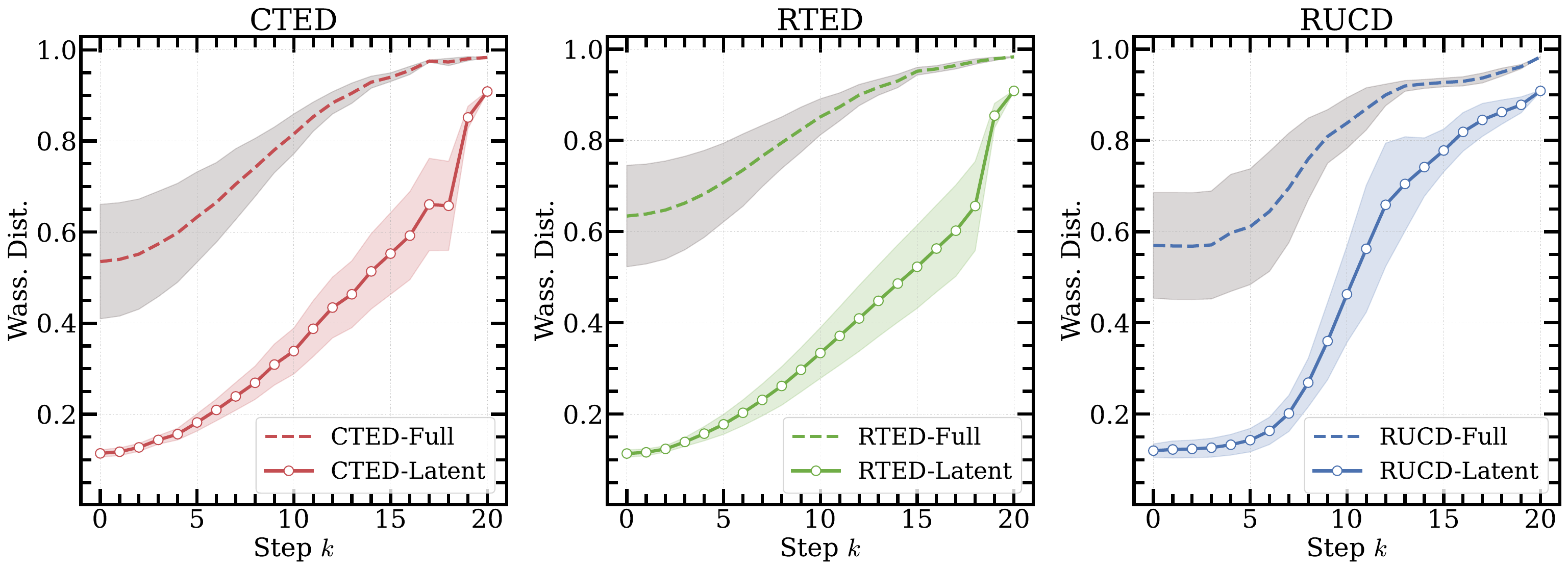}}
\caption{1-Wasserstein distance between the generated ensemble and the true ensemble of the QM9 subset dataset as a function of diffusion step $k$, comparing backward diffusion performed in the full Hilbert space (“Full”, dashed) and in the latent space learned by a quantum autoencoder (“Latent”, solid). Results are shown for CTED (left), RTED (center), and RUCD (right). %Latent-space diffusion systematically slows the growth of the distance and reduces fluctuations, indicating more controlled noise injection compared to full-space diffusion.
Shaded regions denote the standard deviation over ten trials.}
\label{fig:compare:qm9}
\end{figure*}

We apply our framework to the QM9 dataset~\cite{ramakrishnan:2014:QM9}, a standard benchmark in computational chemistry. QM9 comprises approximately $1.34 \times 10^5$ equilibrium geometries of small organic molecules containing up to nine heavy atoms (C, N, O, and F), along with additional hydrogen atoms, for a maximum of 29 atoms per molecule. Each entry includes molecular properties and three-dimensional structural information, making the dataset widely used for tasks such as molecular property prediction and 3D structure generation~\cite{wu:qvae-mole:2024}.
Because of current computational limitations, processing the full dataset is not feasible. We therefore restrict our study to molecules with exactly eight heavy atoms and two distinct ring systems, resulting in a more structurally homogeneous subset of 4236 molecules. The 3D structures of these molecules are encoded into 7-qubit quantum states following the encoding scheme in Ref.~\cite{wu:qvae-mole:2024}.

In our numerical experiments, we set $n_d = 7$, $n_f = 3$, and use $K = 20$ diffusion steps. We implement each backward denoising unitary $V_k$ with a hardware-efficient ansatz with $L=10$ layers. We train each $V_k$ for 1000 epochs on 2000 training samples with a mini-batch size of 100, minimizing the 1-Wasserstein distance.
For this nontrivial distribution in a high-dimensional Hilbert space, however, accurate backward denoising can require deeper circuits and longer training, making direct learning in the full 7-qubit space challenging.

To mitigate this, we adopt a latent-space strategy~\cite{Naipunnya:2025:latent} based on a quantum autoencoder (QAE)~\cite{Romero:2017:QAE}. 
This idea is also aligned with prior quantum generative modeling works that train in an extracted latent feature space to improve learnability and output quality in image-generating tasks~\cite{chang:2024:latengQuantumGAN}.
Specifically, we first train a QAE to compress the 7-qubit states into a $n_l=4$-qubit latent representation, and then perform diffusion and denoising in the latent space. The QAE encoder is implemented as a depth-20 hardware-efficient circuit consisting of single-qubit $RY$ parameterized rotations followed by a ring of CNOT gates repeated across layers. It is trained to decouple the remaining ``trash" qubits by maximizing their overlap with the $\ket{0\ldots 0}$ state. We optimize the QAE parameters using the Adam optimizer with a learning rate of 0.001 for 2000 epochs. After training, the decoder (implemented as the approximate inverse circuit) maps generated latent states back to the original 7-qubit space, enabling sampling in the full molecular state representation while reducing the learning burden of the diffusion model.
In the latent space, we perform forward and backward diffusion with $n_d=4$, $n_f=2$ qubits, and use $K=20$ diffusion steps. Other training conditions match those used for training in the full 7-qubit space.

Figure~\ref{fig:compare:qm9} compares the performance of the backward diffusion (denoising) process when it is learned directly in the full Hilbert space (“Full”) and when it is learned in the latent space obtained from a quantum autoencoder (“Latent”), followed by decoding back to the original space.
For all three diffusion schemes (CTED, RTED, and RUCD), backward denoising learned in the latent space consistently achieves lower Wasserstein distances and reduced fluctuations compared to denoising learned directly in the full Hilbert space. This indicates that backward maps trained in the latent representation can invert the diffusion process more accurately, particularly at intermediate and late backward steps where the input states are more strongly corrupted. The improvement is especially evident in the smoother growth of the Wasserstein distance and the smaller variance across trials, demonstrating enhanced stability and robustness of the learned denoising dynamics.

This behavior can be understood by viewing the quantum autoencoder as an information bottleneck that reshapes the inverse problem faced by the backward diffusion. By compressing the original 7-qubit molecular states into a lower-dimensional latent space, the autoencoder isolates the dominant degrees of freedom that characterize the data distribution while discarding directions that contribute weakly to reconstruction but strongly to noise amplification. As a result, the backward learned in the latent space operates on a reduced manifold where the inverse mapping from noisy states to cleaner ones is better conditioned. After decoding, the reconstructed full-space states therefore remain closer to the target ensemble than those obtained by backward denoising trained directly in the full Hilbert space.

% The effect of latent-space denoising is observed across all diffusion mechanisms but is most pronounced for RUCD. In the full space, RUCD backward diffusion exhibits comparatively larger errors and fluctuations, reflecting the difficulty of inverting strongly scrambling, near–Haar-random dynamics. When trained in the latent space, however, RUCD backward denoising becomes significantly more stable and accurate, indicating that the autoencoder suppresses highly randomizing directions that are detrimental to inversion. For CTED and RTED, which already generate more structured diffused ensembles, latent-space denoising further improves stability and reduces variance, leading to consistently better backward reconstruction. Overall, Fig. 6 demonstrates that learning the backward diffusion in a learned latent representation substantially improves the accuracy and conditioning of the denoising process, especially for complex, high-dimensional quantum data distributions.

\subsection{Resource Comparison}
To clarify the resource advantage of CTED and RTED, we compare its implementation cost with RUCD in terms of entangling-gate complexity and circuit depth. In our implementation with RUCD, each diffusion layer contains an all-to-all entangling block, which requires $O(n_d^2)$ two-qubit gates. Over $K$ diffusion steps, the total entangling gate count scales as $O(Kn_d^2)$ per sample, leading to significant compilation and calibration overhead.

Beyond gate-count scaling, RUCD also incurs a substantial compilation overhead. Each diffusion step requires implementing distinct random unitary circuits, which must be decomposed into the native gate set of the target hardware. This results in a large number of circuit instances with different parameterizations, increasing the burden of circuit synthesis, calibration, and control. On near-term quantum devices, such repeated compilation and calibration can dominate the experimental cost, particularly for deep or highly entangling circuits.

In contrast, CTED and RTED implement diffusion through continuous-time evolution under a fixed chaotic Hamiltonian. On analog platforms such as Rydberg arrays or trapped-ion systems, this Hamiltonian is realized via global, time-independent control fields. Each diffusion step corresponds to physical time evolution rather than a compiled gate sequence, and the required control resources do not scale with $n_d^2$ in the same manner.
The primary cost is total evolution time $T=K\Delta t$, rather than circuit depth.
Concretely, the diffusion process is realized by varying only the evolution time or repeating the same physical dynamics, without recompiling into different gate sequences.

The two schemes, CTED and RTED, trade different hardware resources. CTED requires coherent evolution over the cumulative time $k\Delta t$, but only a single complement measurement and reset per sample. RTED requires only short, fixed-duration evolutions, but a fast, high-fidelity measurement-and-reset of $\mathcal{F}$ at every step. CTED is therefore favored when coherence times are long, but the reset is slow. RTED is favored when reset and mid-circuit measurements are fast and accurate.

The advantages of our method extend beyond quantum analog platforms. Even on gate-based quantum computers, reusing a fixed Hamiltonian evolution is more efficient than recompiling distinct circuits at each step. 
The cumulative CTED propagator $e^{-iHk\Delta t}$ and the repeated RTED propagator are realized by repeating the same fixed Trotter step $e^{-iH\Delta t}$, with only the number of repetitions changing, rather than synthesizing a new random circuit at each step.
For example, consider implementing $e^{-iH\Delta t}$ in the recently proposed space-time efficient analog rotation quantum computing (STAR)  architecture~\cite{akahoshi:prxquantum:STAR,toshio:prx:star:2025}, where Clifford operations are fault-tolerant and arbitrary Pauli rotations are implemented via repeat-until-success (RUS) analog rotations.
For the mixed-field Ising Hamiltonian $H$ defined on $n$ qubits, a first-order Trotter step decomposes into commuting layers of single-qubit and nearest-neighbor two-qubit rotations, requiring $3n-1$ Pauli rotation gates and $O(n)$ Clifford gates. In the STAR model, each Pauli rotation takes on average a constant number of clock cycles, and commuting rotations can be parallelized, yielding a runtime per step that scales as $O(n)$ in serial execution and $O(\log n)$  under parallel execution. In CTED and RTED, this fixed evolution operator is reused across all data instances, so the compilation and control overhead is independent of the dataset size.
However, on gate-based hardware, first-order Trotterization carries an error of $O(\Delta t^2)$ per step. Controlling this error to a fixed accuracy may require higher-order product formulas or a smaller time step, increasing the gate count and the associated noise on NISQ devices. This overhead is absent for native analog Hamiltonian evolution.

These distinctions become especially important on analog hardware, where global Hamiltonian evolution is native but synthesizing deep random circuits with all-to-all entangling gates is experimentally demanding. 
The reuse of a single control Hamiltonian significantly reduces compilation overhead and improves experimental feasibility, especially on analog platforms where programmable gate synthesis is limited.
Hence, chaotic diffusion provides a hardware-efficient alternative to circuit-based scrambling while maintaining comparable generative performance

\subsection{Noise Robustness}

We investigate the robustness of CTED, RTED, and RUCD to explicit stochastic noise and quantify the degradation in the 1-Wasserstein distance between the generated and target ensembles.
We emphasize that these noise studies are not intended as a one-to-one cross-method comparison, since the dominant error mechanisms differ qualitatively. Here, RUCD is primarily affected by discrete gate and control errors that accumulate with the number of applied pulses. In contrast, CTED and RTED are primarily affected by time-continuous decoherence during analog Hamiltonian evolution and by measurement and reset imperfections.
Accordingly, we perform within-method sensitivity analyses using noise models that reflect the most relevant physical mechanism for each setting.

For RUCD, we inject stochastic single-qubit Pauli errors after each single-qubit rotation gate. After every application of $ e^{-ig Y /2}$ and $e^{-ig Z /2}$ in the scrambling circuit, we apply a random Pauli operator $P\in \{X, Y, Z\}$ with probability $p_1$, chosen uniformly among the three Paulis. When no error occurs, the circuit proceeds ideally. This corresponds to a per-gate Pauli-depolarizing channel acting after the unitary $U$ at each noisy gate location:
\begin{align}
    \cN(\rho) = (1-p_1)U\rho U^\dagger + \dfrac{p_1}{3}\sum_{P\in \{X, Y, Z\}} PU\rho U^\dagger P.
\end{align}
Since such errors accumulate with circuit depth, the effective corruption strength scales approximately with the number of single-qubit gate locations $N_{1q}$. For small $p_1$, the expected number of injected Pauli errors approximates $p_1N_{1q}$, and the overall deviation increases roughly linearly in $p_1$ at fixed depth.

For CTED and RTED, we do not assume the absence of mid-circuit errors. Rather, we model noise as decoherence acting continuously during the analog evolution. 
Concretely, we consider standard Markovian pure dephasing implemented at the state-vector trajectory level~\cite{brun:2000:trajectories}. After coherent evolution $U(t) = e^{-iHt}$, each qubit undergoes an independent stochastic $Z$ flip with probability $p_{\phi}(t) = \dfrac{1-e^{-\gamma_{\phi}t}}{2}$,
which corresponds to an unraveling of phase-damping noise (the chaotic evolution is unitary when $\gamma_{\phi}=0$).
This dephasing noise model reflects the dominant experimental imperfections on the analog platforms~\cite{levine:2018:dephasing,Zhang:2020:ion:dephasing,trapped-ion:2020:dephasing}.
We define $p_2 = p_{\phi}(\Delta t)$ as the dephasing strength after one diffusion step duration in the RTED. Then $p_{\phi}(k\Delta t) = \dfrac{1 - (1-2p_2)^k}{2}$ is the noise probability applied after $k$th step in CTED.
Note that $p_1$ (in RUCD) and $p_2$ (in CTED and RTED) parameterize different physical mechanisms (per-gate control faults versus time-integrated decoherence) and are therefore not directly comparable in hardware realism or effective channel strength.

\begin{figure*}
    \centerline{\includegraphics[width=0.9\linewidth]{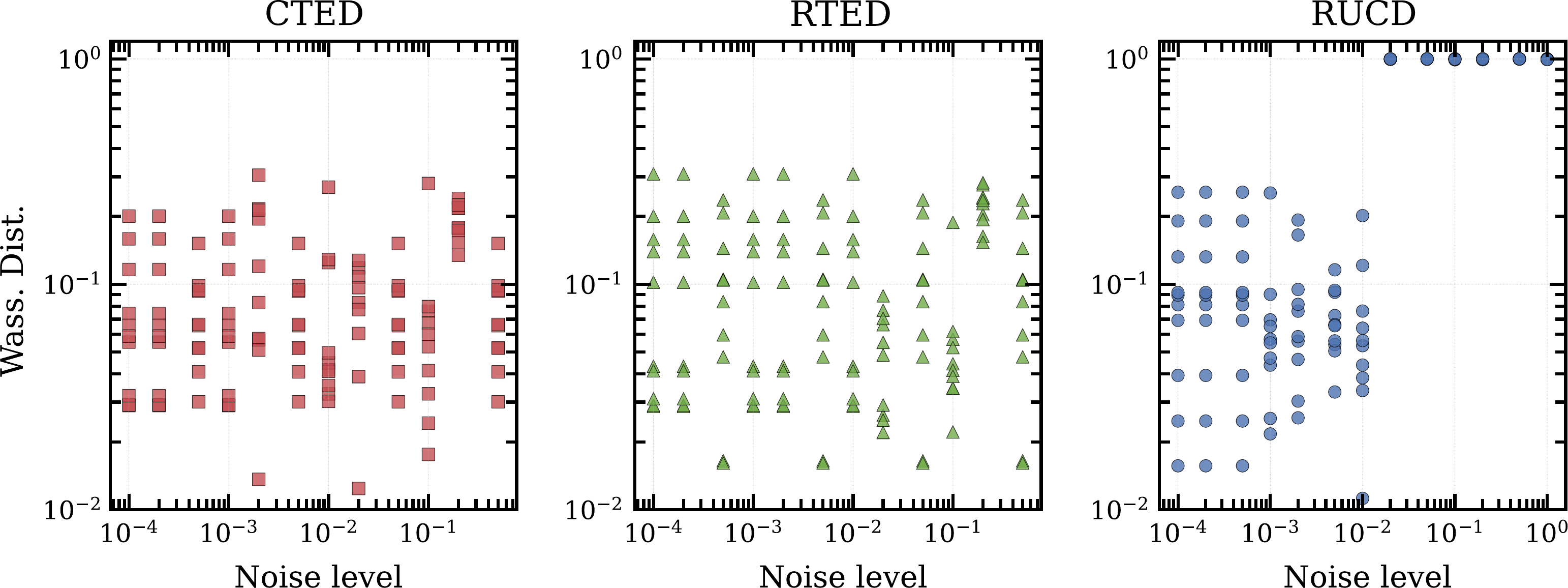}}
\caption{Wasserstein distance between the generated ensemble and the target ensemble as a function of the noise level for CTED (left), RTED (center), and RUCD (right). Each marker corresponds to an independent run. CTED and RTED exhibit gradual degradation as noise increases, whereas RUCD shows a sharp loss of performance at higher noise levels. Here, we consider different physical noise mechanisms for the gate-based ($p_1$ with RUCD) and analog ($p_2$ with CTED and RTED) models.}
\label{fig:compare:noise}
\end{figure*}

Figure~\ref{fig:compare:noise} summarizes the noise sensitivity of CTED, RTED, and RUCD as the noise levels $p_2$ and $p_1$ increase for the task of learning the multi-cluster distribution of quantum data.
Here, we consider the setting in which samples from the original (clean) target ensemble are not available during training. 
Instead, the model is trained only on diffused samples, which serve as the noisy training data. CTED and RTED remain stable across the noise range, with Wasserstein distances staying bounded.
We notice that the horizontal axes in Fig.~\ref{fig:compare:noise} parameterize different physical
noise mechanisms for the gate-based ($p_1$ with RUCD) and analog ($p_2$ with CTED and RTED) models. The figure should be read as a within-method sensitivity analysis, not as a one-to-one comparison of noise
strengths across architectures.
%In particular, the performance deterioration is smooth, indicating that the learned backward denoising process remains effective even when the forward scrambling is imperfect.

RUCD exhibits a much sharper breakdown. Beyond a moderate noise level, many instances rapidly jump to Wasserstein distances of order unity, indicating a failure to recover the target ensemble. This behavior is consistent with the structure of the injected noise. Because RUCD relies on explicit circuit-based scrambling composed of many single-qubit rotations, it contains a large number $N_{1q}$ of noisy gate locations. At high $p_1$, the accumulated corruption grows quickly with circuit depth, effectively driving the diffused states toward highly randomized distributions at early forward steps. Consequently, the learned backward map cannot recover the target, as the forward channel has already erased the necessary information.

Unlike the gate-based implementation of RUCD, where errors accumulate across many noisy operations, analog execution of CTED and RTED involves fewer locations where this noise model applies, resulting in less accumulated error. CTED and RTED generate the ensemble by measuring the complement system $\cF$ and retaining the post-measurement state on the main system $\cM$. These measurements collapse and discard the complement subsystem, preventing errors in $\cF$ from coherently propagating across multiple steps. 
Note that CTED and RTED are still sensitive to time-integrated decoherence affecting $\cM$ during evolution.
%This process also converts certain coherent error effects into classical randomness in the ensemble weights, which is often less detrimental to distribution learning than coherent noise. Therefore, the projected ensemble generation mechanism naturally suppresses error propagation pathways that would otherwise be present in deep coherent circuits.

We provide a mathematical explanation for the intrinsic robustness of CTED and RTED under coherent errors in the complement system $\cF$. 
Each diffusion step in chaotic diffusion schemes consists of (i) coupling the main system $\cM$ to a complement $\cF$ via a joint Hamiltonian evolution, followed by (ii) a projective measurement on $\cF$ and discarding $\cF$. 
%This defines a \textit{quantum instrument} acting on $\cM$, which converts certain coherent errors on $\cF$ into classical randomness in the measurement record.
% Recall that each diffusion step prepares a joint pure state $\ket{\Phi}$ on $\cM+\cF$,
% followed by a projective measurement of $\cF$ in the computational basis $\{ \ket{\bz_\cF} \}_\bz$. The resulting (unnormalized) post-measurement state on $\cM$ is
% \begin{equation}
% \ket{\tilde{\Phi}_\cM(\bz_\cF)} = (\bbI_\cM \otimes \bra{\bz_\cF})\,|\Phi\rangle,
% \qquad
% p(z)=\|\,\ket{\tilde{\Phi}_\cM(\bz_\cF)}\,\|^2,
% \qquad
% |\Phi_\cM(\bz_\cF)\rangle = \frac{|\tilde{\Phi}_{\cM}(\bz_\cF)\rangle}{\sqrt{p(\bz_\cF)}}.
% \label{eq:pe_pure}
% \end{equation}
This measure-and-discard structure has two stabilizing consequences. First, once $\cF$ is measured and discarded, any noise acting on $\cF$ cannot coherently propagate through $\cF$ across diffusion steps (especially in RTED where $\cF$ is re-prepared at each step). Second, errors on $\cF$ immediately before measurement effectively modify the measurement rule on $\cF$ rather than inducing a coherent, accumulating miscalibration on $\cM$. We formalize this observation by showing that arbitrary noise on $\cF$ prior to the measurement is equivalent to measuring $\cF$ with a noisy POVM, and that Pauli noise on measured qubits leads only to outcome relabeling, leaving permutation-invariant ensemble statistics unchanged.

\begin{lemma}[Noise becomes POVM]
\label{lem:noise_povm}
Let $\cN_\cF$ be an arbitrary CPTP noise channel acting on $\cF$ immediately before the measurement. Then the measurement statistics under $\cN_\cF$ are identical to those obtained by measuring the noiseless state $\ket{\Phi}$ with a POVM $\{E_\bz\}_{\bz}$ on $\cF$ defined by
\begin{equation}
E_\bz := \cN_\cF^\dagger(\ket{\bz}\bra{\bz}),
\qquad E_\bz \succeq 0,
\qquad \sum_\bz E_\bz = \bbI_F,
\label{eq:Ez_def}
\end{equation}
where $ \cN_\cF^\dagger$ is the adjoint (Heisenberg-picture) map. In particular, the noisy outcome probability is
\begin{equation}
\tilde{p}(\bz)
= \langle \Phi | (\bbI_\cM \otimes E_\bz) |\Phi\rangle.
\label{eq:noisy_prob_povm}
\end{equation}
\end{lemma}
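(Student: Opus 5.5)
The plan is to compute the noisy outcome probability directly in the Schrödinger picture and then move the channel onto the projector using the defining duality of the adjoint map. Let $\rho = \ket{\Phi}\bra{\Phi}$ be the joint state on $\cM+\cF$. With the noise $\cN_\cF$ applied only to $\cF$, the pre-measurement state is $(\mathrm{id}_\cM \otimes \cN_\cF)(\rho)$. A computational-basis measurement on $\cF$ then gives outcome $\bz$ with probability
\begin{equation*}
\tilde{p}(\bz) = \tr\!\left[(\bbI_\cM \otimes \ket{\bz}\bra{\bz})\,(\mathrm{id}_\cM\otimes\cN_\cF)(\rho)\right].
\end{equation*}
The adjoint is defined by $\tr[X\,\cN_\cF(Y)] = \tr[\cN_\cF^\dagger(X)\,Y]$ for all operators $X,Y$. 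I will extend this identity to $\mathrm{id}_\cM\otimes\cN_\cF$, whose adjoint is $\mathrm{id}_\cM\otimes\cN_\cF^\dagger$. This step is most cleanly justified with a Kraus decomposition $\cN_\cF(Y)=\sum_a K_a Y K_a^\dagger$, which gives $\cN_\cF^\dagger(X)=\sum_a K_a^\dagger X K_a$, so the tensor-extended duality follows from cyclicity of the trace. Applying it yields $\tilde{p}(\bz)=\tr[(\bbI_\cM\otimes E_\bz)\rho]=\bra{\Phi}(\bbI_\cM\otimes E_\bz)\ket{\Phi}$, which is Eq.~\eqref{eq:noisy_prob_povm}.

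Next I will verify that $\{E_\bz\}$ is a valid POVM. Positivity follows because $E_\bz=\sum_a K_a^\dagger\ket{\bz}\bra{\bz}K_a$ is a sum of operators of the form $v v^\dagger$ with $v=K_a^\dagger\ket{\bz}$. Completeness follows from linearity together with $\sum_\bz\ket{\bz}\bra{\bz}=\bbI_\cF$. These give $\sum_\bz E_\bz=\cN_\cF^\dagger(\bbI_\cF)=\sum_a K_a^\dagger K_a=\bbI_\cF$, where the last equality is exactly trace preservation of $\cN_\cF$.

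If the lemma is also meant to cover the post-measurement states on $\cM$, I will note one further point. The unnormalized conditional state $\tr_\cF[(\bbI_\cM\otimes\ket{\bz}\bra{\bz})(\mathrm{id}\otimes\cN_\cF)(\rho)]$ equals $\tr_\cF[(\bbI_\cM\otimes E_\bz)\rho]$ by the same partial-trace version of the duality. Hence the full statistics on $\cM$ are those of the POVM $\{E_\bz\}$ applied to the noiseless state. There is no real obstacle in this argument. The only step needing care is the extension of the adjoint to the tensor product with $\mathrm{id}_\cM$ and to partial traces, and the Kraus representation handles it directly.
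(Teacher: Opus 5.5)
Your proof is correct and follows the paper's argument. You write $\tilde{p}(\bz)=\tr[(\bbI_\cM\otimes\ket{\bz}\bra{\bz})(\mathrm{id}_\cM\otimes\cN_\cF)(\rho)]$, move the channel onto the projector via the adjoint, and obtain positivity and completeness from complete positivity and trace preservation; your Kraus-form justification and the extra remark on the conditional states on $\cM$ are valid details that the paper leaves implicit.
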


\begin{proof}
Let $\rho=|\Phi\rangle\langle\Phi|$. Applying $\cN_\cF$ on $\cF$ gives
$\tilde{\rho}=(\bbI_\cM\otimes \cN_\cF)(\rho)$.
The noisy probability of measuring $z$ is
\begin{equation}
\tilde{p}(\bz) = \tr\!\left[(\bbI_M \otimes \ket{\bz}\bra{\bz})\,\tilde{\rho}\right]
= \tr\!\left[\big(\bbI_\cM \otimes \cN_\cF^\dagger(\ket{\bz}\bra{\bz})\big)\,\rho\right],
\end{equation}
which yields Eq.~\eqref{eq:noisy_prob_povm} with $E_z$ defined in Eq.~\eqref{eq:Ez_def}. Positivity and completeness follow from complete positivity and trace preservation of $\cN_\cF$.
\end{proof}

Lemma~\ref{lem:noise_povm} formalizes the classicalization mechanism: any noise on the measured subsystem can be absorbed into the measurement rule (a noisy POVM), rather than coherently accumulating on $\cM$ as an unknown unitary distortion.

We next specialize to Pauli errors acting on measured qubits in $\cF$ immediately before a computational-basis measurement. Let $q\in \cF$ denote a measured qubit and $\be_q$ the bitstring with a $1$ at position $q$ and $0$ elsewhere.

\begin{corollary}[Pauli relabeling invariance]
\label{cor:pauli_relabel}
Consider a projected-ensemble step in which $\cF$ is measured in the computational basis. If, immediately before measurement, a Pauli operator $P_q\in\{X_q,Y_q,Z_q\}$ acts on a measured qubit $q\in \cF$, then:
(i) $Z_q$ leaves the conditional (unnormalized) post-measurement state on $\cM$ unchanged up to a global phase and hence does not change the projected ensemble;
(ii) $X_q$ and $Y_q$ permute (relabel) the measurement outcome as $\bz \mapsto \bz\oplus \be_q$ (up to a phase for $Y_q$), leaving the set of conditional states invariant up to relabeling.
Consequently, any permutation-invariant ensemble functional is unchanged, including the moment operators.
\end{corollary}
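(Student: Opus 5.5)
The plan is to compute directly how each single-qubit Pauli acts on the computational-basis projectors of $\cF$. Equivalently, one can invoke Lemma~\ref{lem:noise_povm} with the unitary channel $\cN_\cF(\rho)=P_q\rho P_q^\dagger$. In that case the effective POVM is $E_\bz = P_q^\dagger \ket{\bz}\bra{\bz} P_q$, which is again a rank-one projector $\ket{\bz'}\bra{\bz'}$. The whole proof therefore reduces to identifying $\bz'$ and the accompanying phase, case by case. Since the Pauli is unitary and acts only on $\cF$, I will work at the level of amplitudes, not just probabilities. The object to track is the unnormalized conditional state
\begin{equation*}
\ket{\tilde{\Phi}_\cM(\bz)} = (\bbI_\cM\otimes\bra{\bz}P_q)\ket{\Phi} = (\bbI_\cM\otimes (P_q^\dagger\ket{\bz})^\dagger)\ket{\Phi}.
\end{equation*}

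The key identities are the single-qubit relations $Z_q\ket{\bz}=(-1)^{z_q}\ket{\bz}$, $X_q\ket{\bz}=\ket{\bz\oplus\be_q}$ and $Y_q\ket{\bz}=i(-1)^{z_q}\ket{\bz\oplus\be_q}$; the Paulis are Hermitian, so $P_q^\dagger=P_q$. Substituting these gives the three cases.

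\textbf{Case (i), $P_q=Z_q$.} The conditional state equals $(-1)^{z_q}\ket{\tilde\Phi_\cM(\bz)}$. This is the noiseless state times a global phase, so both the Born weight $p(\bz)$ and the normalized state $\ket{\Phi_\cM(\bz)}$ are unchanged.

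\textbf{Case (ii), $P_q\in\{X_q,Y_q\}$.} The noisy conditional state at outcome $\bz$ equals, up to a phase, the noiseless conditional state at outcome $\bz\oplus\be_q$. Hence $\tilde p(\bz)=p(\bz\oplus\be_q)$ and the normalized state becomes $\ket{\Phi_\cM(\bz\oplus\be_q)}$, again up to phase. The map $\bz\mapsto\bz\oplus\be_q$ is an involution on $\{0,1\}^{n_f}$, so it is a bijection of outcomes. The noisy ensemble is therefore the pushforward of the noiseless one under this relabeling: the same multiset of (weight, state) pairs with permuted labels.

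It then remains to conclude the statement about ensemble functionals. Any functional depending only on the multiset $\{(p(\bz),\ket{\Phi_\cM(\bz)}\bra{\Phi_\cM(\bz)})\}$ is invariant under a bijective relabeling; this is what "permutation-invariant" means. The moment operator $\rho^{(m)}=\sum_\bz p(\bz)(\ket{\Phi_\cM(\bz)}\bra{\Phi_\cM(\bz)})^{\otimes m}$ is such a sum, and the global phases cancel in the projectors, so it is unchanged. For the classically-enhanced ensemble with random input $\bx$, the same argument applies for each fixed $\bx$, since the noise acts after evolution and the relabeling is independent of $\bx$.

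The only point needing care is the phase bookkeeping for $Y_q$, namely that the phase $i(-1)^{z_q}$ is outcome-dependent. It remains global for each fixed outcome, which is all that matters, because each conditional state is normalized separately. I expect no real obstacle beyond stating this clearly.
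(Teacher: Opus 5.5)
Your proposal is correct and follows essentially the same route as the paper: a direct amplitude-level computation of $(\bbI_\cM\otimes\bra{\bz})(\bbI_\cM\otimes P_q)\ket{\Phi}$ for each Pauli, giving a global phase for $Z_q$ and the relabeling $\bz\mapsto\bz\oplus\be_q$ for $X_q,Y_q$, followed by invariance of outcome sums such as the moment operators under this bijection. Your explicit phase $Y_q\ket{\bz}=i(-1)^{z_q}\ket{\bz\oplus\be_q}$ is the unpacked form of the paper's $Y_q=iX_qZ_q$, and the Lemma~\ref{lem:noise_povm} framing and the remark on the classically-enhanced ensemble are harmless additions.
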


\begin{proof}
Let $|\Phi\rangle$ be the joint pre-measurement state on $\cM+\cF$ and define
$\ket{\tilde{\Phi}_\cM(\bz)} = (\bbI_\cM \otimes \bra{\bz})\ket{\Phi}$.
If $Z_q$ acts before measurement, then $Z_q\ket{\bz} = (-1)^{z_q}\ket{\bz}$, where $z_q$ is the bit value of the qubit $q$. Hence,
$|\tilde{\Phi}^{(Z_q)}_\cM(\bz)\rangle
= (\bbI_\cM\otimes \bra{\bz}) (\bbI_\cM\otimes Z_q)\ket{\Phi}
= (-1)^{z_q} |\tilde{\Phi}_M(z)\rangle,$
which changes only a global phase and therefore leaves both $p(\bz)=\| \ket{\tilde{\Phi}_\cM(\bz)}\|^2$ and the normalized conditional state $|\Phi_\cM(\bz)\rangle$ unchanged.

If $X_q$ acts before measurement, then $X_q\ket{\bz} = \ket{\bz\oplus \be_q}$, giving
$
|\tilde{\Phi}^{(X_q)}_\cM(\bz)\rangle
= (\bbI_\cM\otimes \bra{\bz}) (\bbI_\cM\otimes X_q)\ket{\Phi}
= (\bbI_\cM\otimes \bra{\bz\oplus \be_q})\ket{\Phi}
= \ket{\tilde{\Phi}_\cM(\bz\oplus \be_q)},
$
so the conditional outputs are permuted by $\bz\mapsto \bz\oplus \be_q$.
For $Y_q=iX_qZ_q$, the same permutation holds up to an outcome-dependent phase, which again does not affect $p(\bz)$ or the normalized conditional state.
Finally, moment operators and other permutation-invariant ensemble statistics are unchanged under relabeling of outcomes, since the sums over outcomes are invariant under bijections of the index set. This completes the proof.
\end{proof}

% In RTED, each diffusion step re-prepares the complement $F$ and measures/discards it afterward. Consequently, noise on $F$ at step $k$ cannot coherently affect step $k+1$ through quantum memory in $F$. Mathematically, the evolution of $M$ forms a Markovian stochastic process driven by the instrument $\{\mathcal{M}_z\}_z$, where the trajectory distribution factorizes as $p(z_{1:K})=\prod_{k=1}^K p(z_k\,|\,\rho_M^{(k-1)})$. This ``measure-and-reset'' structure suppresses coherent error build-up pathways that are present in deep coherent circuits, contributing to the observed robustness of chaotic projected-ensemble diffusion.

\section{Summary and Discussion}
We have proposed a quantum diffusion model with two schemes, CTED and RTED, using chaotic Hamiltonian evolution. These schemes match RUCD's accuracy in learning quantum data distributions while eliminating its complex spatio-temporal control, making them ideal for analog quantum systems. 
Our experiments on synthetic multi-cluster and circular datasets demonstrate that CTED and RTED achieve generative accuracy comparable to RUCD. In the QM9 molecular dataset, we further show that combining chaotic diffusion with a quantum autoencoder enables stable learning in a compressed latent space, significantly improving performance in high-dimensional Hilbert spaces. 

Our chaotic diffusion processes are designed to be compatible with analog quantum simulators where dynamics is generated by a time-independent many-body Hamiltonian with minimal spatiotemporal control. In our simulations, we used a mixed-field Ising model on a 1D chain, where the competition between interactions and non-commuting fields places the system in a chaotic (thermalizing) regime away from integrable limits. This structure maps naturally to several leading platforms. In Rydberg atom arrays, effective Ising couplings arise from blockade-induced interactions, while a global Rabi drive and detuning implement the transverse ($X$) and longitudinal ($Z$) fields~\cite{Bernien:2017:rydberg}.
Site-resolved measurement enables direct implementation of the complementary measurement required in the projected ensemble, and local re-initialization can support RTED-style repeated short evolutions. 
In trapped-ion simulators, phonon-mediated interactions generate programmable long-range $Z_iZ_j$ coupling with tunable range, and global drives implement transverse fields~\cite{Britton:2012:trapped}. The complementary measurement and reset are feasible within typical experimental workflows. Cold-atom implementations are most direct in Rydberg-dressed lattice settings (similar to Rydberg arrays), while superexchange-based lattice spin models may require additional engineering to realize the mixed-field Ising form~\cite{Gross:2017:cold:atoms}.

% Connectivity and boundary conditions mainly affect the mixing time and the stationary projected ensemble itself, since our endpoint is generally not Haar at finite energy density.
% This yields a platform-specific schedule that fixes the effective diffusion strength while allowing the underlying connectivity to vary.

Because chaoticity can be imperfect in realistic devices, it is important to understand sensitivity to near-integrable regimes. 
%Our framework does not require maximal scrambling. Rather, it requires that the forward channel sufficiently ``forget" detailed input information while remaining learnable by the backward model. 
In Sec.~\ref{subsec:cluster:circular}, we use the Hamiltonian in Eq.~\eqref{eqn:Ising:Ham} at the operating point $(h^x, h^y) = (0.8090, 0.9045)$, and $J = 1.0$ in the chaotic region. To investigate the sensitivity along the chaotic region, we sweep the fields $(h^x,h^y)$ along two cuts through the operating point (fixed $h^x=0.8090$, varying $h^y$; and fixed $h^y=0.9045$, varying $h^x$), retraining the model with 10 random seeds per point.
To ensure a fair comparison, we perform the per-step scrambling by normalizing the time step as $\Delta t = c/\Lambda(h^x,h^y)$, where $\Lambda(h^x,h^y)$ is the spectral width of the Hamiltonian and $c$ is set so that $\Delta t$ reduces to the value used at the operating point.
We find that the final Wasserstein distance shows no systematic dependence on the fields within the chaotic regime. Therefore, we report the results only at the operating point.
We interpret this insensitivity as evidence that the diffusion framework does not require maximal scrambling. Rather, it requires that the forward channel sufficiently ``forget" detailed input information while remaining learnable by the backward model. 
For the low-complexity target distributions considered in our tasks, this condition holds across the entire chaotic region, so generative performance is governed by the expressivity of the backward model rather than by the details of the forward dynamics, and any robustly chaotic operating point performs comparably.
We expect the influence of the forward dynamics, and hence chaoticity, to become more pronounced for harder, higher-entropy target distributions.

%In practice, this can be assessed by sweeping Hamiltonian parameters and monitoring simple forward indicators such as purity decay or local observable relaxation alongside backward-model stability and sample quality. We expect degradation primarily when dynamics approaches integrable limits, where mixing slows and the projected ensemble becomes strongly structured. Away from these limits, the method should remain robust under realistic connectivity and moderate Hamiltonian imperfections.

Beyond hardware efficiency, chaotic diffusion modifies the structure of the forward process. RUCD is designed to approximate Haar-random ensembles at large diffusion steps, whereas CTED and RTED converge to finite-temperature projected ensembles constrained by conserved quantities. In our numerical experiments, this difference manifests in smoother forward trajectories and reduced fluctuations during backward training for CTED and RTED compared to RUCD. While all three approaches achieve comparable final generative accuracy in the regimes studied, the chaotic diffusion schemes avoid full Haar over-scrambling and may provide a better-conditioned inverse problem in certain parameter regimes. 
However, a fair comparison between these methods requires a scrambling schedule that matches the scrambling scale across protocols. For example, we can match the scrambling level by calibrating each forward process to the same scalar corruption scale.
A systematic theoretical analysis of this potential advantage remains an important direction for future work.

Finally, throughout this work we employed a generic hardware-efficient ansatz for
the backward denoising circuits in order to isolate the effect of the forward diffusion
mechanism from ansatz engineering. Problem-tailored ans\"atze offer a natural route to improving generative accuracy, particularly for high-dimensional data, and a systematic study of the interplay between ansatz design and the latent-space strategy is left for future work.

%\bibliography{main.bib}

%

\end{document}